\newtheorem{thm}{\protect\theoremname}
\theoremstyle{definition}
\theoremstyle{plain}
\newtheorem{prop}[thm]{Proposition}
\theoremstyle{plain}
\newtheorem{cor}[thm]{Corrolary}
\theoremstyle{plain}
\newtheorem{lem}[thm]{Lemma}
\theoremstyle{remark}
\begin{document}


\title{Hierarchical frequency clusters in adaptive networks of phase oscillators} 



\author{Rico Berner}
\email[]{rico.berner@physik.tu-berlin.de}
\affiliation{Institute of Theoretical Physics, Technische Universit\"at Berlin, Hardenbergstr. 36, D-10623 Berlin, Germany}
\affiliation{Institute of Mathematics, Technische Universit\"at Berlin, Strasse des 17. Juni 136, D-10623 Berlin, Germany.}
\author{Jan Fialkowski}
\affiliation{Institute of Theoretical Physics, Technische Universit\"at Berlin, Hardenbergstr. 36, D-10623 Berlin, Germany}
\author{Dmitry Kasatkin}
\affiliation{Institute of Applied Physics of RAS, 46 Ul'yanov Street, 603950 Nizhny Novgorod, Russia}
\author{Vladimir Nekorkin}
\affiliation{Institute of Applied Physics of RAS, 46 Ul'yanov Street, 603950 Nizhny Novgorod, Russia}
\author{Serhiy Yanchuk}
\affiliation{Institute of Mathematics, Technische Universit\"at Berlin, Strasse des 17. Juni 136, D-10623 Berlin, Germany.}
\author{Eckehard Sch\"oll}
\affiliation{Institute of Theoretical Physics, Technische Universit\"at Berlin, Hardenbergstr. 36, D-10623 Berlin, Germany}


\date{\today}

\begin{abstract}
Adaptive dynamical networks appear in various real-word systems. One of the simplest phenomenological models for investigating basic properties of adaptive networks is the system of coupled phase oscillators with adaptive couplings. In this paper, we investigate the dynamics of this system. We extend recent results on the appearance of hierarchical frequency-multi-clusters by investigating the effect of the time-scale separation. We show that the slow adaptation in comparison with the fast phase dynamics is necessary for the emergence of the multi-clusters and their stability. Additionally, we study the role of double antipodal clusters, which appear to be unstable for all considered parameter values. We show that such states can be observed for a relatively long time, i.e., they are metastable. A geometrical explanation for such an effect is based on the emergence of a heteroclinic orbit.	
\end{abstract}

\pacs{}

\maketitle 

\begin{quotation}
	Adaptive networks are characterized by the property that their
	connectivity can change in time, depending on the state of the network.
	A prominent example of adaptive networks are neuronal networks with
	plasticity, \textit{i.e.}, an adaptation of the synaptic coupling. Such an adaptation
	is believed to be related to the learning and memory mechanisms. In
	other real-world systems, the adaptivity plays important role as well~\cite{GRO08a}. This paper investigates a phenomenological model of
	adaptively coupled phase oscillators. The considered model is a natural
	extension of the Kuramoto system to the case with dynamical couplings.
	In particular, we review and provide new details on the self-organized
	emergence of multiple frequency-clusters.
\end{quotation}
\section{Introduction}\label{sec:intro}

One of the main motivations for studying adaptive dynamical networks
comes from the field of neuroscience where the weights of the synaptic
coupling can adapt depending on the activity of the neurons that are
involved in the coupling~\cite{GER96,CLO10,GER14a}.
For instance, the coupling weights can change in response to the relative
timings of neuronal spiking~\cite{MAR97a,ABB00,CAP08a,LUE16}. Adaptive
networks appear also in chemical~\cite{JAI01}, biological, or social
systems~\cite{GRO08a}.

This paper is devoted to a simple phenomenological model of adaptively
coupled phase oscillators. The model has been extensively studied
recently~\cite{KAS17,AOK09,AOK11,KAS16a,NEK16,GUS15a,PIC11a,TIM14,REN07,AVA18,BER19},
and it exhibits diverse complex dynamical behaviour. In particular,
stable multi-frequency clusters emerge in this system, when the oscillators
split into groups of strongly coupled oscillators with the
same average frequency. Such a phenomenon does not occur in the classical
Kuramoto or Kuramoto-Sakaguchi system. The clusters are shown to possess
a hierarchical structure, \textit{i.e.}, their sizes are significantly different~\cite{BER19}. Such a structure leads to significantly different frequencies
of the clusters and, as a result, to their uncoupling. This
phenomenon is reported for an adaptive network of Morris-Lecar bursting
neurons with spike-timing-dependent plasticity rule~\cite{POP15}. In addition, the role of hierarchy and modularity in brain networks has been discussed recently~\cite{Bassett2008,Bassett2010,Lohse2014,Betzel2017,Ashourvan2019}. Both features therefore seem to play a key role for real-world neural networks. Along these lines, we study a dynamical model to analyse the self-organized formation of such network structures.

In this paper, we first provide a non-technical overview of known
results on multi-frequency clusters from Refs.~\onlinecite{KAS17,BER19}.
Apart from that, we investigate the role of the time-scale separation.
Particularly, we find that the slow adaptation mechanism
in comparison with the fast dynamics of the oscillators is an important
necessary ingredient for the emergence of stable multi-clusters. Discussing
the stability, we point out that the stability analysis for the limiting
case without adaptation does not provide correct stability results
for arbitrarily small adaptation. In addition, the stability for one-clusters is described depending on the time-scale separation. With these results, we quantitatively relate the stability of one- with multi-frequency cluster states which goes beyond the qualitative analysis given in Ref.~\onlinecite{BER19}. Providing a novel result for the stability of two-clusters, the instability of evenly sized clusters is shown. Finally, we discuss the role of a special
type of cluster states, called \textit{double antipodal} clusters. We show that these states are unstable for all parameter values but can appear
as saddles connecting synchronous and splay states in phase space. As a result, such
states can be observed as a "meta-stable" transition between the
phase-synchronous and non-phase-synchronous state. Moreover, the double antipodal states are shown to play an important role for the global dynamics of the adaptive system.

The structure of the paper is as follows. Section~\ref{sec:model}
presents the model; Sec.~\ref{sec:Hcluster} and~\ref{sec:blocks} provide a non-technical
overview of Ref.~\onlinecite{BER19} on the multi-frequency cluster states and their classification. The new results are included in Sec.~\ref{sec:1Clstability}-\ref{sec:MClstability}. In
Sec.~\ref{sec:blocks} we further describe one-clusters, i.e., single
blocks of which the multi-clusters are composed. Section~\ref{sec:1Clstability} discusses the results of Ref.~\onlinecite{BER19} on the stability of one- and multi-frequency clusters from a different viewpoint focusing on the influence of the time-scale separation. We provide novel rigorous results on the stability for the whole classes of antipodal, double antipodal, and $4$-phase cluster states. The stability regions for antipodal and splay-type one-clusters are explicitly described for any parameter range of the time-scale separation. In Sec.~\ref{sec:doubleAntipodal}, the role of double antipodal states for the global dynamics of the system is discussed. The construction of multi-cluster states from one-clusters is demonstrated in Sec.~\ref{sec:RoleSlowAdap}. For the existence of multi-clusters a new upper bound for the time separation is derived. In Sec.~\ref{sec:MClstability}, we connect the stability properties of one- and multi-clusters. As conjectured in Ref.~\onlinecite{BER19}, the instability of evenly sized two-clusters of splay type are proved. For the sake of readability, the proofs for any of the statements in this section are provided in the appendix~\ref{sec:app_StabOneCluster}. We end with the Conclusion.

\section{Model}\label{sec:model}
In this article, we consider a network of $N$
adaptively coupled phase oscillators 
\begin{align}
	\frac{d\phi_{i}}{dt} & =\omega-\frac{1}{N}\sum_{j=1}^{N}\kappa_{ij}\sin(\phi_{i}-\phi_{j}+\alpha),\label{eq:PhiDGL_general}\\
	\frac{d\kappa_{ij}}{dt} & =-\epsilon\left(\sin(\phi_{i}-\phi_{j}+\beta)+\kappa_{ij}\right),\label{eq:KappaDGL_general}
\end{align}
where $\phi_{i}\in[0,2\pi)$ is the phase of the $i$-th oscillator
($i=1,\dots,N$) and $\omega$ the natural frequency. The oscillators
interact according to the coupling structure represented by the coupling
weights $\kappa_{ij}$ ($i,j=1,\dots,N$) as dynamical variables.
The parameter $\alpha$ can be considered as a phase-lag of the interaction~\cite{SAK86}.
This paradigmatic model of an adaptive network Eqs.~(\ref{eq:PhiDGL_general})\textendash (\ref{eq:KappaDGL_general})
has attracted a lot of attention recently \cite{KAS17,AOK09,AOK11,KAS16a,NEK16,GUS15a,PIC11a,TIM14,REN07,AVA18,BER19}.
It provides a generalization of the Kuramoto-Sakaguchi model with fixed $\kappa_{ij}$~\cite{ACE05a,KUR84,OME12b,STR00,PIK08}.

The coupling topology of the network at time $t$ is characterized
by the coupling weights $\kappa_{ij}(t)$. With a small parameter
$0<\epsilon\ll1$, the dynamical equation (\ref{eq:KappaDGL_general})
describes the adaptation of the network topology depending on the
dynamics of the network nodes. In the neuroscience
context, such an adaptation can be also called plasticity~\cite{AOK11}.
The chosen adaptation function has the form $-\sin(\phi_{i}-\phi_{j}+\beta)$
with control parameter $\beta$. With this, different plasticity rules
can be modelled, see Fig.~\ref{fig:Plasticity_fucntion_betadep}.
For instance, for $\beta=-\pi/2$, a Hebbian-like rule is obtained
where the coupling $\kappa_{ij}$ is increasing between any two phase
oscillators with close phases, i.e., $\phi_{i}-\phi_{j}$ close to
zero~\cite{HEB49,HOP96,SEL02,AOK15} (fire together - wire together). If $\beta=0$, the link $\kappa_{ij}$
is strengthened if the $j$-th oscillator precedes the $i$-th.
Such a relationship promotes a causal structure in the oscillatory
system. In neuroscience these adaptation rules are typical for spike timing- dependent
plasticity~\cite{CAP08a,MAI07,LUE16,POP13}. 
\begin{figure}
	\includegraphics{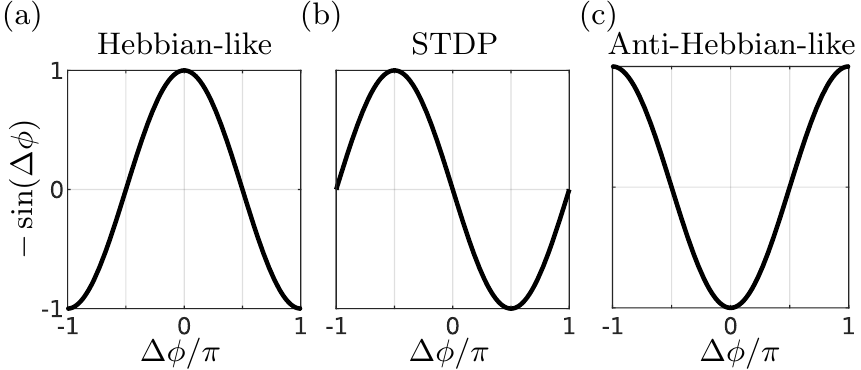}
	\caption{\label{fig:Plasticity_fucntion_betadep} 
		The plasticity function $-\sin(\Delta\phi+\beta)$ and corresponding plasticity rules are presented. (a)~$\beta=-\frac{\pi}{2}$ (Hebbian), (b)~$\beta=0$ (Spike timing-dependent plasticity, STDP), (c) $\beta=\frac{\pi}{2}$ (Anti-Hebbian).}
\end{figure}

Let us mention important properties of the model~\eqref{eq:PhiDGL_general}\textendash \eqref{eq:KappaDGL_general}.
The parameter $\epsilon\ll1$ separates the time scales of the slowly
adapting coupling weights from the fast moving phase oscillators.
Further, the coupling weights are confined to the interval $-1\le\kappa_{ij}\le1$
due to the  fact that $d\kappa_{ij}/dt\le0$ for $\kappa_{ij}=1$ and $d\kappa_{ij}/dt\ge0$
for $\kappa_{ij}=-1$, see Ref.~\onlinecite{KAS17}. Due to the invariance of
system (\ref{eq:PhiDGL_general})\textendash (\ref{eq:KappaDGL_general})
with respect to the shift $\phi_{i}\mapsto\phi_{i}+\psi$ for all
$i=1,\dots,N$ and $\psi\in[0,2\pi)$, the frequency $\omega$ can
be set to zero in the co-rotating coordinate frame $\phi\mapsto\phi+\omega t$.
Finally we mention the symmetries of the system~\eqref{eq:PhiDGL_general}\textendash \eqref{eq:KappaDGL_general}
with respect to the parameters $\alpha$ and $\beta$: 
\begin{align*}
	(\alpha,\beta,\phi_{i},\kappa_{ij}) & \mapsto(-\alpha,\pi-\beta,-\phi_{i},\kappa_{ij}),\\
	(\alpha,\beta,\phi_{i},\kappa_{ij}) & \mapsto(\alpha+\pi,\beta+\pi,\phi_{i},-\kappa_{ij}).
\end{align*}
These symmetries allow for a restriction of the analysis to the parameter region
$\alpha\in[0,\pi/2)$ and $\beta\in[-\pi,\pi)$.

The Kuramoto order parameter $R_{1}$  measures the synchrony of phase oscillators
$\bm{\phi}=(\phi_{1},\dots,\phi_{N})^{T}$. Correspondingly, the $l$-th
moment order parameter $R_{l}$ is given by 
\begin{align}
	R_{l}(\bm{\phi}):=\left|\frac{1}{N}\sum_{j=1}^{N}e^{\mathrm{i}l\phi_{j}}\right|
\end{align}
where $l\in\mathbb{N}$. In the following section we will use this
quantity in order to characterize several dynamical states. 
\section{Hierarchical frequency clusters}\label{sec:Hcluster}
System (\ref{eq:PhiDGL_general})--(\ref{eq:KappaDGL_general})
has been studied numerically in Refs.~\onlinecite{AOK09,AOK11,NEK16,KAS16a,KAS17}.
In particular, it is shown that starting from uniformly distributed
random initial condition ($\phi_{i}\in[0,2\pi)$, $\kappa_{ij}\in[-1,1]$
for all $i,j\in\left\{ 1,\dots,N\right\} $) the system can reach
different multi-frequency cluster states with hierarchical structure.
An individual cluster in the multi-cluster state consists of frequency
synchronized groups of phase oscillators. In the following we discuss
the structural form of these clusters. Subsequently, frequency multi-clusters states are called multi-cluster states or multi-clusters for simplicity.

Figure~\ref{fig:MC_state_general} shows a hierarchical multi-cluster
state. The solution was obtained by integrating the system (\ref{eq:PhiDGL_general})--(\ref{eq:KappaDGL_general})
numerically starting from uniformly distributed random initial conditions.
The self-couplings $\kappa_{ii}$ are set to zero in numerical simulations, since they
do not influence the  relative dynamics of the system 
\footnote{All variables $\kappa_{ii}$ converge to $-\sin\beta$, which leads to the same constant term $\sin(\beta) \sin(\alpha) / N$ in the right hand side of Eq.~(\ref{eq:PhiDGL_general}). The latter term can be absorbed in the co-rotating coordinate frame}. 
We re-order the oscillators (after sufficiently long transient time)
by first sorting the oscillators with respect to their average
frequencies. After that the oscillators with the same frequency are
sorted by their phases. Figure~\ref{fig:MC_state_general}(a) displays
the coupling matrix (right) of the multi-cluster state and a representation
of the coupling structure as a network graph (left). The coupling
matrix demonstrates a clear splitting into three groups. This splitting
is also visible in the graph representation of the coupling network.
The coupling weights between oscillators of the same group vary in
a larger range than between those of different groups which are generally smaller in magnitude. The splitting
into three groups is manifested in the behaviour of the phase oscillators,
as well. We find that the oscillators of the same group posses the
same constant frequency with possible phase lags, Fig.\ref{fig:MC_state_general}(b).
We call the groups of oscillators (frequency) clusters and the
corresponding dynamical states multi-(frequency)cluster states. 
\begin{figure}
	\includegraphics{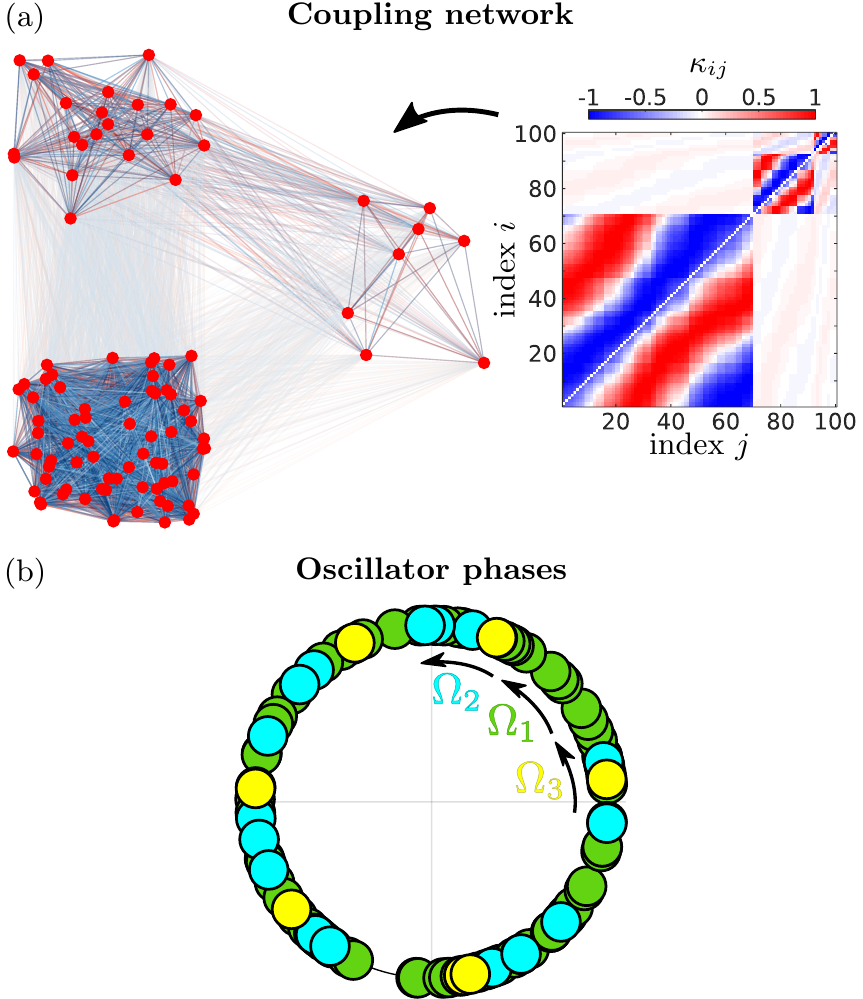}
	\caption{\label{fig:MC_state_general} Three-frequency cluster of splay type
		at $t=10000$. (a) Coupling weights represented as a graph (left)
		and as a coupling matrix (right). In the graph representation, the
		dynamical vertices are represented by red nodes and the edges are
		coloured with respect to the coupling weight. Red and blue refer to
		positive and negative coupling weights, respectively. Light and dark
		colours refer to weak and strong coupling weights, respectively. (b)
		Distribution of the phases $\phi_{i}$ for each of the three clusters.
		Each node represents one oscillator and is coloured with respect to
		the cluster to which it belongs. Parameter values: $\epsilon=0.01$,
		$\alpha=0.3\pi$, $\beta=0.23\pi$, $\omega=0$, and $N=100$.}
\end{figure}

In a multi-cluster, the coupling matrix $\kappa$ can be
divided into different blocks; $\kappa_{ij,\mu\nu}$ will refer to the
coupling weight between the $i$-th oscillator of the $\mu$-th cluster
to the $j$-th oscillator of the $\nu$-th cluster. Analogously, $\phi_{i,\mu}$
denotes the $i$-th phase oscillator in the $\mu$-th cluster. In
general, the temporal behaviour for each oscillator in a $M$-cluster
state takes the form 
\begin{align}
	\phi_{i,\mu}(t)=\Omega_{\mu}t+a_{i,\mu}+s_{i,\mu}(t) &  & \begin{split}\mu=1,\dots,M\\
		i=1,\dots,N_{\mu}
	\end{split}
	\label{eq:MCstate_gen}
\end{align}
where $M$ is the number of clusters, $N_{\mu}$ is the
number of oscillators in the $\mu$-th cluster, $a_{i,\mu}\in[0,2\pi)$
are phase lags, and $\Omega_{\mu}\in\mathbb{R}$ is the collective
frequency of the oscillators in the $\mu$-th cluster. 
The functions $s_{i,\mu}$ are bounded.

The numerical analysis of system \eqref{eq:PhiDGL_general}--\eqref{eq:KappaDGL_general}
shows the appearance of different multi-cluster states depending
on  particular choices of the phase lag parameters $\alpha$ and $\beta$
as well as on initial conditions. Starting from random initial conditions
the system can end up in several states such as multi-clusters and
chimera-like states~\cite{KAS17}. Figure~\ref{fig:MC_states_types}
shows examples for the three types of multi-cluster states which appear
dynamically in \eqref{eq:PhiDGL_general}--\eqref{eq:KappaDGL_general}.

\subsection{Splay type multi-clusters}

The first type is called splay type multi-cluster state, see Fig.~\ref{fig:MC_states_types}(a).
The separation into three clusters is clearly visible in the coupling
matrix, as well as a hierarchical structure in the cluster sizes. Regarding
the distribution of the phases, we notice that the oscillators from
each group are almost homogeneously dispersed on the circle. In fact,
the phases from each cluster  fulfil  the condition $R_{2}(\bm{\phi}_{\mu})=0$
($\mu=1,2,3$). Note that splay states as they are defined in several
other works~\cite{NIC92,STR93a,CHO09} share the property $R_{1}(\bm{\phi})=0$.
This property can be seen as a measure of incoherence for the oscillator
phases, as well. In fact, it was shown that splay states are part
of a whole family of solutions~\cite{BUR11,ASH16a} given by exactly
$R_{1}(\bm{\phi})=0$. Further, $R_{2}(\phi)=R_{1}(2\bm{\phi})$ relates
the two measures of incoherence. These facts motivate the definition of those clusters with $R_{2}(\bm{\phi})=0$ as \emph{splay-type clusters}.

The temporal behaviour for all phase oscillators in the splay multi-cluster
state is characterized by a constant frequency which differs for
the different clusters, \textit{i.e.}, according to \eqref{eq:MCstate_gen},
$\phi_{i,\mu}(t)=\Omega_{\mu}t+a_{i,\mu}$ with $R_{2}(\bm{a}_{\mu})=0$
for all $\mu=1,2,3$ and $i=1,\dots,N_{\mu}$. In addition, the hierarchical
cluster sizes are reflected in the frequencies. Oscillators of a big cluster
have a higher frequency than those of smaller clusters. The coupling
weights between the phase oscillators are fixed or change periodically
with time depending on whether the oscillators belong to the same
or different clusters, respectively. Moreover, the amplitude of coupling
weights between clusters depends on the frequency difference of the
corresponding clusters. The higher the frequency difference, the
smaller is the amplitude. The periodic behaviour of the coupling
weights between clusters is present in all types of multi-cluster
states (Fig.~\ref{fig:MC_states_types}(a,b,c)).

\subsection{Antipodal type multi-clusters}

Figure~\ref{fig:MC_states_types}(b) shows another possible multi-cluster
state. As in Fig.~\ref{fig:MC_states_types}(a) the clusters are
clearly visible and their oscillators show frequency synchronized temporal
behaviour. In addition, the time series for the oscillators show periodic
modulations on top of the linear growth. This additional dynamics
is the same for all oscillators of the same cluster, and hence they
are still temporally synchronized. We have $\phi_{i,\mu}(t)=\Omega_{\mu}t+a_{i,\mu}+s_{\mu}(t)$.
In analogy to the coupling weights between the clusters, the amplitudes
of the bounded function $s_{\mu}(t)$ depend on the differences of
the cluster frequencies.

In contrast to the splay states, the phase distribution fulfils $R_{2}(\bm{a}_{\mu})=1$
for all $\mu=1,2,3$, see Fig~\ref{fig:MC_states_types}(b), middle
panel. Hence, all oscillators of a cluster have either the same phase
$a_{\mu}\in[0,2\pi)$ or the antipodal phase $a_{\mu}+\pi$ such that
$2a_{i,\mu}=2a_{\mu}$ modulo $2\pi$ for all $i=1,\dots,N_{\mu}$. Therefore, the
clusters represented in Fig~\ref{fig:MC_states_types}(b) are called
\emph{antipodal type multi-cluster}. Note that with this formal definition of an antipodal state, in-phase clusters belong to the class of antipodal clusters.

\subsection{Mixed type multi-clusters}

The third type of multi-cluster states combines the previous two types.
The 2-cluster state shown in Fig.~\ref{fig:MC_states_types}(c) consists
of one splay cluster and one antipodal cluster. We call these states
\emph{mixed type multi-cluster states}. As we have seen before, the interaction of a cluster with
an antipodal cluster induces a modulation $s(t)$ additional to the
linear growth of the oscillator's phase. In contrast, the interaction
with a splay cluster does not introduce any modulation. Thus, the temporal
dynamics of the oscillators in the antipodal cluster ($\mu=1$) have
$s_{i,1}(t)\equiv0$  while the  oscillators in the splay cluster
($\mu=2$) show additional bounded modulations $s_{i,2}(t)$, see Fig.~\ref{fig:MC_states_types}(c). For the oscillators of the splay cluster we plot the time series of two representatives.
We notice the temporal shift in the dynamics of the  two representatives
of the splay cluster. The oscillators in the splay cluster are not
completely temporally synchronized. More specifically we have $\phi_{i,1}(t)=\Omega_{1}t+a_{i,1}$
with $R_{2}(\bm{a}_{1})=1$ for $i=1,\dots,N_{1}$ and $\phi_{i,2}(t)=\Omega_{2}t+a_{i,2}+s_{i,2}(t)$
with $R_{2}(\bm{a}_{2})=0$ for $i=1,\dots,N_{2}$. 
\begin{figure*}
	\centering \includegraphics{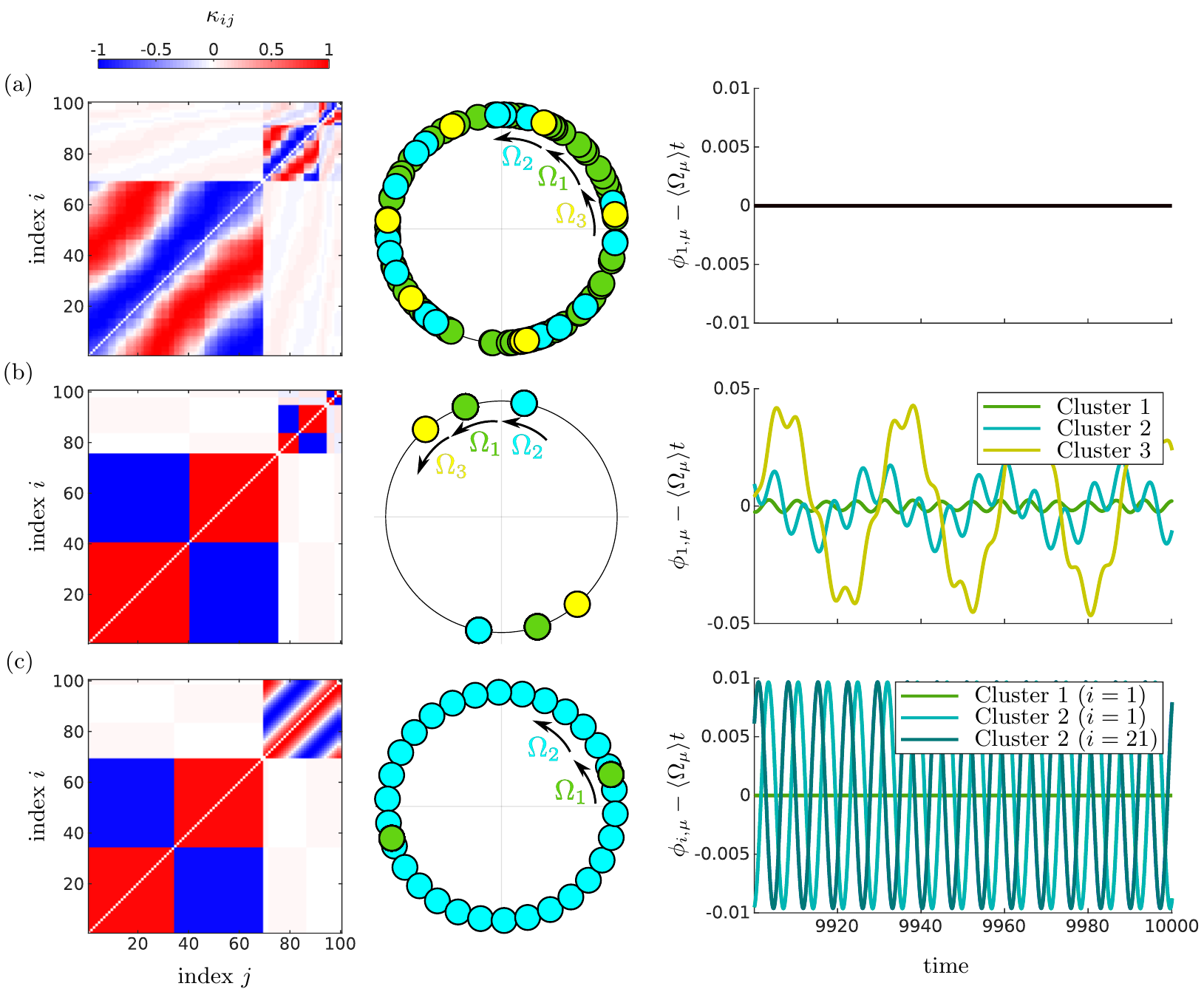}
	\caption{\label{fig:MC_states_types} Three different types of multi-cluster
		states at $t=10000$ with $N=100$ and $\epsilon=0.01$. For all types,
		the coupling matrix (left), distribution of the phases (middle), and
		time series of representative phase oscillators from each cluster
		(right) are presented. In the plot of the phase distribution, each
		node represents one oscillator and is colored with respect to the
		cluster to which it belongs. The time series are shown after subtracting
		the average linear growth $\phi_{i,\mu}(t)-\langle\Omega_{\mu}\rangle t$.
		The colouring of the time series (shaded for visibility) of a representative phase oscillator from one cluster is in accordance with the pictures in the middle panel. (a) Splay type 3-cluster for $\alpha=0.3\pi$,
		$\beta=0.23\pi$; (b) Antipodal type 3-cluster for $\alpha=0.3\pi$,
		$\beta=-0.53\pi$; (c) Mixed type 2-cluster for $\alpha=0.3\pi$,
		$\beta=-0.4\pi$.}
\end{figure*}

Despite the complexity of the three types of multi-clusters states,
the structures can be broken down into simple blocks. In fact, one-cluster
states of splay and antipodal type serve as building blocks in order
to create more complex multi-cluster structures. In the following
section we will analyse these blocks. The building of higher cluster
structures will be discussed for 2-cluster states of splay type. 
\section{One-cluster states}\label{sec:blocks}
As we have seen in Section~\ref{sec:Hcluster},
certain one-cluster states serve as building blocks for higher multi-cluster
states. In this section, we review the basic properties of one-cluster states and conclude that their shape and existence is independent of the time-separation parameter, see Ref.~\onlinecite{BER19} for more details. 

Formally a one-cluster state is one group of frequency synchronized
phase oscillators given by
\begin{align*}
\phi_{i}=\Omega t+a_{i}
\end{align*} with
$a_{i}\in[0,2\pi)$ ($i=1,\dots,N$) and
constant coupling weights  
\begin{align}\label{eq:1Clcouplings}
	\kappa_{ij}=-\sin(a_{i}-a_{j}+\beta)
\end{align}
($i,j=1,\dots,N$).

Figure~\ref{fig:1CL_states_types} shows three possible types of
one-cluster states for the system \eqref{eq:PhiDGL_general}--\eqref{eq:KappaDGL_general}.
It has been shown that these are the only existing types of one-cluster
states~\cite{BER19}. The first two shown in Fig.~\ref{fig:1CL_states_types}(a,b)
are the splay ($R_{2}(\mathbf{a})=0$) and the antipodal ($R_{2}(\mathbf{a})=1$)
clusters which were already discussed in Sec.~\ref{sec:Hcluster}.
The third type in Fig.~\ref{fig:1CL_states_types}(c) consists of
two groups of antipodal phase oscillators with a fixed phase lag $\psi$.
We call this class of states \emph{double antipodal}. As it was mentioned in Sec.~\ref{sec:Hcluster}.B., the formal definition of an antipodal state includes full in-phase relation of the oscillators. Thus in extension of the typical configuration shown in Fig.~\ref{fig:1CL_states_types}(c), there exist double antipodal states where only three or even only two different phases are occupied. The constant $\psi$ is the unique (modulo $2\pi$) solution of the equation 
\begin{align}\label{eq:DoubleAntipodalCond}
	\frac{1-q}{q}\sin(\psi-\alpha-\beta)=\sin(\psi+\alpha+\beta),
\end{align}
where $q=Q/N$ and $Q$ is the number of phase shifts $a_{i}$ such that $a_{i}\in\{0,\pi\}$.
The corresponding frequencies for the three types of states are 
\begin{align}
	\Omega=\begin{cases}
		\frac{1}{2}\cos(\alpha-\beta) & \mbox{if }\,R_{2}(\mathbf{a})=0,\\
		\sin\alpha\sin\beta & \mbox{if }\,R_{2}(\mathbf{a})=1,\\
		\frac{1}{2}\left[\cos(\alpha-\beta)-R_{2}(\mathbf{a})\cos\psi\right] & \text{if double antipodal}
	\end{cases}\label{eq:OmegaPL}
\end{align}
\begin{figure}
	\centering \includegraphics{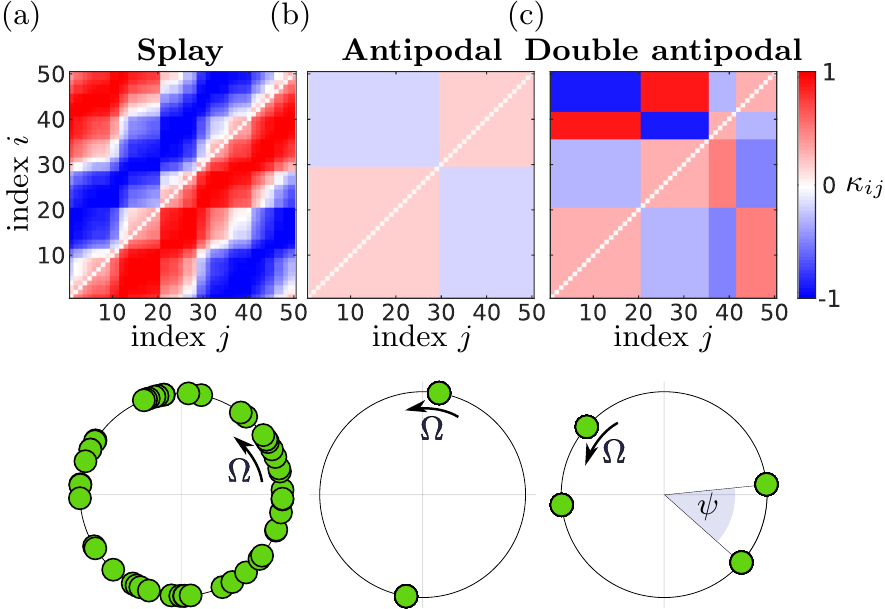}
	\caption{\label{fig:1CL_states_types} All possible types of one-cluster states
		for system Eq.~\eqref{eq:PhiDGL_general}--\eqref{eq:KappaDGL_general}.
		(a) Splay state; (b) antipodal state; (c) double antipodal state.}
\end{figure}

Note that the condition $R_{2}(\mathbf{a})=0$ gives rise to a $N-2$
dimensional family of solutions. Well-known representatives of this
family are rotating-wave states which have the following form $\mathbf{a}_{k}=(0,2\pi k/N,\dots,2\pi k(N-1)/N)^{T}$
for any wave-number $k=1,\dots,N-1$. The existence as well as the
explicit form for any of the three types of one-cluster states do
not depend on the time separation parameter $\epsilon$. As long as
$\epsilon>0$, the building blocks appear to be solutions of the system \eqref{eq:PhiDGL_general}--\eqref{eq:KappaDGL_general}.
\section{Stability of one-cluster states}\label{sec:1Clstability}
In sections~\ref{sec:Hcluster} and \ref{sec:blocks}
a large number of co-existing multi-cluster states were discussed.
As multi-clusters are constructed out of one-cluster states, studying the stability of the building blocks is of major importance. In contrast to the existence of one-cluster states, the stability of those states depend crucially on the time-separation parameter. We analyse this dependency below.

The diagram in Fig.~\ref{fig:StabRegions} shows the regions
of stability for antipodal and rotating-wave one-cluster states. The
diagram is based on analytic results which has been recently found. In appendix~\ref{sec:app_StabOneCluster}, we review Corollary~4.3 from Ref.~\onlinecite{BER19} as Proposition~\ref{prop:LinerizedOneCluster_RW_Spectrum} and provide an extension to the whole class of antipodal states, see Corollary~\ref{cor:StabAntipodal}. Further, all other proofs are provided in the appendix~\ref{sec:app_StabOneCluster}, as well.

A linear stability analysis for double antipodal states shows
that they are unstable for all parameters $\alpha$ and $\beta$, see Corollary~\ref{cor:StabDoubleAntipodal} in the appendix~\ref{sec:app_StabOneCluster}. The role of the double antipodal clusters are discussed further
in Sec.~\ref{sec:doubleAntipodal}. 

In Fig.~\ref{fig:StabRegions} the regions of stability are presented
for several values of the time separation parameter $\epsilon$. The
first case in panel (a) assumes $\epsilon=0$, where the network structure
is non-adaptive but fixed to the values given by the one-cluster states,
\textit{i.e.}, $\kappa_{ij}=-\sin(a_{i}-a_{j}+\beta)$ as given in
Sec.~\ref{sec:blocks}. The linearised system in this case is given
by 
\begin{multline}
\frac{\mathrm{d}\delta\phi_{i}}{\mathrm{d}t}=-\frac{1}{2N}\sum_{j=1}^{N}\left(\sin(\alpha-\beta)\right.\\
\left. -\sin(2(a_{i}-a_{j})+\alpha+\beta)\right)\left(\delta\phi_{i}-\delta\phi_{j}\right).\label{eq:linear}
\end{multline}
For the synchronized or antipodal state, the value $2a_{i}\,\text{mod }2\pi$
is the same for all $i$. Hence, the term $2(a_{i}-a_{j})$ disappears
and the linearised system (\ref{eq:linear}) possesses the same form
as the linearised system for the synchronized state of the Kuramoto-Sakaguchi
system~\cite{BUR11} with coupling constant $\sigma(\beta)=-\sin(\beta)$. As it follows from Ref.~\onlinecite{BUR11},
the synchronized as well as all other antipodal states are stable
for $\sigma(\beta)\cos(\alpha)>0$. The region of stability of the
rotating-wave cluster has a more complex shape, see hatched area in
Fig~\ref{fig:StabRegions}(a). We find large areas where both types
of one-cluster states are stable simultaneously, as well as the regions where
no frequency synchronized state is stable. The results shown in Fig~\ref{fig:StabRegions}(a) are
in agreement with Ref.~\onlinecite{AOK11}, where the authors consider
the case $\epsilon=0$ in order to approximate the limit case of extremely
slow adaptation $\epsilon\to0$. However, such an approach for studying
the stability of clusters for small adaptation is not correct in general.
As Figs.~\ref{fig:StabRegions}(b-d) show, the stability of the network
with small adaptation $\epsilon>0$ is different. 

The case $\epsilon=0.01$ is shown in Fig~\ref{fig:StabRegions}(b),
where we observe regions for stable antipodal and rotating-wave states
as well. The introduction of a small but non-vanishing adaptation
changes the regions of stability significantly. The diagram in Fig~\ref{fig:StabRegions}(b)
remains qualitatively the same for smaller values of $\epsilon$. This can be read of from the analytic findings presented in the Proposition~\ref{prop:LinerizedOneCluster_RW_Spectrum} and Corollary~\ref{cor:StabAntipodal}. The changes in the stability areas are due to subtle changes in the equation which determine the eigenvalue of the corresponding linearised system. In fact, the  adaptation introduces the necessary condition $\sin(\alpha+\beta)<0$ for the stability of antipodal states. Additionally, for all splay states, including rotating-wave states, the necessary condition $\sin(\alpha-\beta)+2\epsilon>0$ is introduced, see Proposition~\ref{prop:NecCondSplay}. This is why one can observe a non-trivial effect of adaptivity on the stability in Figs.~\ref{fig:StabRegions}. In particular, the parameter $\beta$, which determines the plasticity
rule, has now a non-trivial impact on the stability of antipodal states for any $\epsilon>0$.
As it can be seen in Fig.~\ref{fig:StabRegions}(b), one-cluster states
of antipodal type are supported by a Hebbian-like adaption ($\beta\approx-\pi/2$)
while splay states are supported by causal rules ($\beta\approx0$).
For the asynchronous region, the dynamical system~\eqref{eq:PhiDGL_general}\textendash \eqref{eq:KappaDGL_general}
can exhibit very complex dynamics and show chaotic motion~\cite{KAS16a}.
This region is supported by an anti-Hebbian-like rule ($\beta\approx\pi/2$).

By increasing the parameter $\epsilon$, see Fig.~\ref{fig:StabRegions}(c,d),
two observations can be made. First, the region of asynchronous dynamical
behaviour is shrinking. For $\epsilon=1$, we find at least one stable
one-cluster state for any choice of the phase lag parameters $\alpha$
and $\beta$. Secondly, the regions where both types of one-cluster
states are stable are shrinking as well. In the limit of instant network adaptation,
\textit{i.e.}, $\epsilon\to\infty$, the stability regions are completely
separated. Both types of one-cluster states divide the whole parameter
space into two areas. In this case, the boundaries are described by
$\alpha+\beta=0$ and $\alpha+\beta=\pi$. This division can be seen from the analytic findings presented in Proposition~\ref{prop:LinerizedOneCluster_RW_Spectrum} (Appendix A). In the case of antipodal states, the quadratic equation which determines the Lyapunov coefficients has negative roots if and only if $\epsilon+\cos(\alpha)\sin(\beta)>0$ and $\sin(\alpha+\beta)<0$. Here, even for $\epsilon> 1$, the condition $\sin(\alpha+\beta)<0$ is the only remaining one. Similarly, we find $\sin(\alpha+\beta)>0$ as a condition for the stability of rotating-wave states for $\epsilon\to\infty$.
\begin{figure}
	\centering \includegraphics{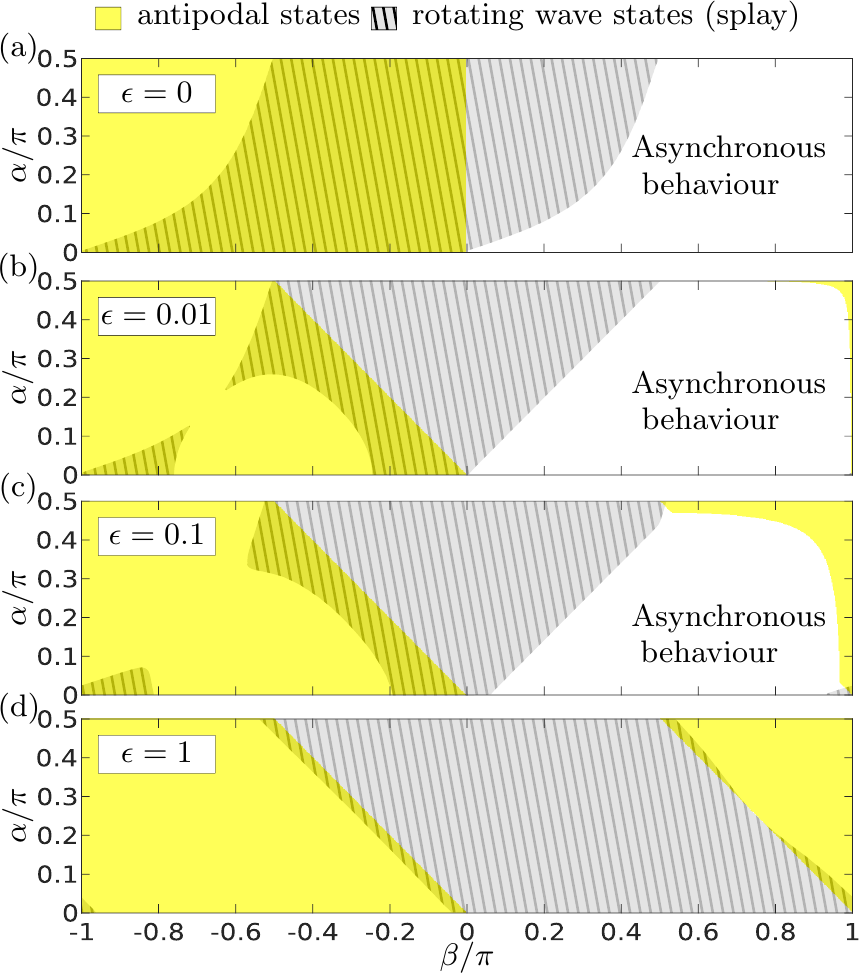}
	\caption{\label{fig:StabRegions} The regions of stability for antipodal and
		rotating-wave states are presented in ($\alpha,\beta$) parameter space
		for different values of $\epsilon$. Coloured and hatched areas refer
		to stable regions for these states as indicated in the legend. White areas
		refer to region where these one-cluster states are unstable. (a) $\epsilon=0$;
		(b) $\epsilon=0.01$; (c) $\epsilon=0.1$; (d) $\epsilon=1$.}
\end{figure}
\section{Double antipodal states}\label{sec:doubleAntipodal}
Splay and antipodal clusters serve as
building blocks for multi-cluster states. The third type, the double
antipodal clusters, are not of this nature since they appear to be
unstable everywhere. As unstable objects, they can still play an important
role for the dynamics. Here we would like to present an example, where
the double antipodal clusters become part of a simple heteroclinic network.
As a result, they can be observed as metastable states in numerics.

As an example, we first analyse the system of $N=3$ adaptively coupled
phase oscillators which is the smallest system with a double antipodal state. According to the definition of a double antipodal state, laid out in Sec.~\ref{sec:blocks}, the phases $a_i$ of the oscillators $\phi_i$ are allowed to take values from the set $\{0,\pi,\psi,\psi+\pi\}$ where $\psi$ uniquely  solves Eq.~(\ref{eq:DoubleAntipodalCond}). Further, at least one oscillator $\phi_i$ with $a_i\in\{0,\pi\}$ and one oscillator $\phi_j$ ($j\ne i$) with $a_j\in\{\psi,\psi+\pi\}$ are needed in order to represent one of the two antipodal groups. Note that for the parameters given in Fig.~\ref{fig:DoubleAntipodal} and $N=3$, the equation~(\ref{eq:DoubleAntipodalCond}) yields $\psi=1.602\pi$ if $a_1,a_2\in\{0,\pi\}$ and $a_3\in\{\psi,\psi+\pi\}$.

\begin{figure}
	\centering \includegraphics{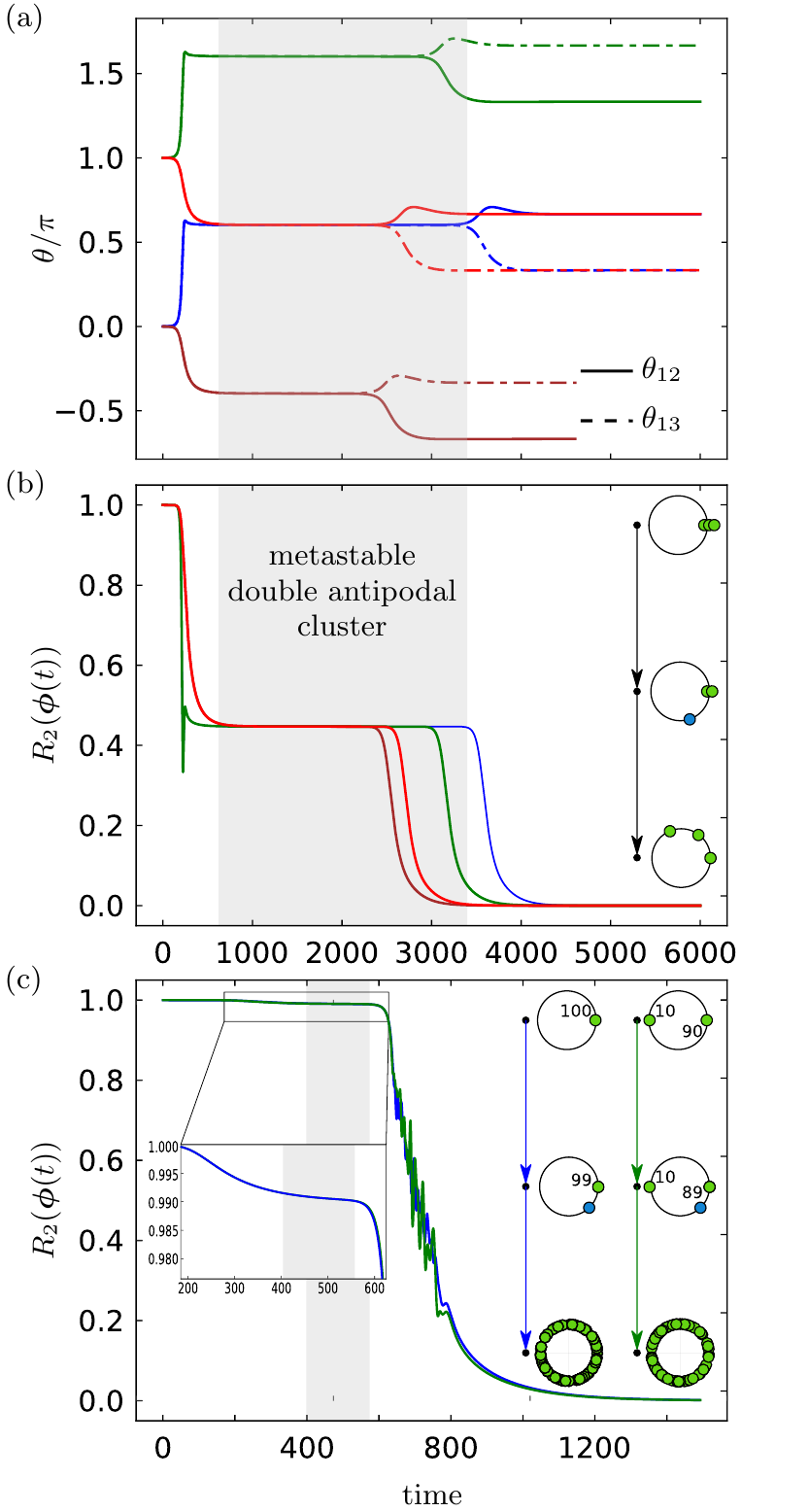}
	\caption{\label{fig:DoubleAntipodal} Heteroclinic orbits between several steady
		states in a system of $3$ and $100$ adaptively coupled phase oscillators. (a) The time series for the relative phases $\theta_{12}$ (solid lines) and $\theta_{13}$ (dashed lines) for $N=3$ are shown. Lines with the same colour correspond to the same trajectories. Panel (b,c) show time series for the second moment order parameter $R_2(\bm{\phi}(t))$ as well as a schematic illustration of the observed heteroclinic connections (right) for (b) $N=3$ and (c) $N=100$. Parameter values: $\epsilon=0.01$, $\alpha=0.4\pi$, and $\beta=-0.15\pi$.}
\end{figure}

In Fig.~\ref{fig:DoubleAntipodal}(a) we present trajectories which initially start close to antipodal clusters. The trajectories in phase space are represented by the relative coordinates $\theta_{12}=\phi_1-\phi_2$ and $\theta_{13}=\phi_1 - \phi_3$. In particular, the two configurations with $(\theta_{12}=0,\theta_{13}=0)$ and $(\theta_{12}=0,\theta_{13}=\pi)$ are considered. The coupling weights are initialized according to Eq.~\eqref{eq:1Clcouplings}. With the given parameters, the unstable manifold of the antipodal state is one-dimensional which can be determined via Proposition~\ref{prop:LinerizedOneCluster_RW_Spectrum}. For the numerical simulation, we perturb the antipodal state in such a way that two distinct orbits close to the unstable manifold are visible. For both configuration the two orbits are displayed in Fig.~\ref{fig:DoubleAntipodal}(a). It can be observed that after leaving the antipodal state the trajectories approach the
double antipodal states before leaving it towards the direction of a splay state. With this we numerically find orbits close to 
"heteroclinic", which connect antipodal, double antipodal, and splay
clusters, see schematic picture on the right in Fig.~\ref{fig:DoubleAntipodal}(b). 
The phase differences $\theta_{12}$ and $\theta_{13}$ at the double antipodal state agree with the solution $\psi$ of Eq.~(\ref{eq:DoubleAntipodalCond}) or $\psi+\pi$.
Figure~\ref{fig:DoubleAntipodal}(b)
further justifies our statements on the heteroclinic contours. Here,
we see the time series for the second moment order parameter
for all trajectories in Fig.~\ref{fig:DoubleAntipodal}(b). It can
be seen that in all cases we start at an antipodal cluster ($R_{2}(\bm{\phi})=1$)
from which the double antipodal state ($R_{2}(\bm{\phi})\approx0.447$, theoretical)
is quickly approached. The trajectories stay close to the double antipodal
cluster for  approximately $2000$ time units (shaded area) before
leaving the invariant set towards the splay state ($R_{2}(\bm{\phi})=0$).

As a second example we analyze a system of $N=100$ adaptively coupled phase oscillators. Here, we choose two particular antipodal states as initial condition and add a small perturbation to both. One of the states is chosen as an in-phase synchronous cluster. In both cases, the couplings weights are initialized in accordance with Eq~\eqref{eq:1Clcouplings}. For Figure~\ref{fig:DoubleAntipodal}(c) we depict the trajectories which show a clear heteroclinic contour between antipodal, double antipodal, and splay state as in the example of three phase oscillators. We illustrate the heteroclinic connections and present $R_2(t)$ for the corresponding trajectories in Figure~\ref{fig:DoubleAntipodal}(c). Here, the zoomed view clearly shows that the trajectories for both initial conditions starting at an antipodal cluster ($R_{2}(\bm{\phi})=1$) again first approaches an double antipodal state ($R_{2}(\bm{\phi})\approx0.990$, theoretical) before leaving it towards a splay state ($R_{2}(\bm{\phi})=0$). More precisely, each trajectory comes close to a particular double antipodal state for which only one oscillator has a phase in $\{\psi,\psi+\pi\}$. Remarkably, these states, also known as solitary states, have been found in a range of other systems of coupled oscillators, as well~\cite{Maistrenko2014}. With Proposition~\ref{prop:stab_double_antipodal} in the Appendix, one can show that this double antipodal states have a stable manifold with co-dimension one which thus divides the phase space. Next to this fact, numerical evidence for the existence of heteroclinic connections between antipodal and double antipodal states as well as between double antipodal and the family of splay states is provided in Figure~\ref{fig:DoubleAntipodal}(c). With this, double antipodal states play an important role for the organization of the dynamics in system~\eqref{eq:PhiDGL_general}--\eqref{eq:KappaDGL_general}. 

\section{Emergence of multi-cluster states and the role of the time-separation parameter}\label{sec:RoleSlowAdap}
In this section we show the importance of the time-separation parameter $\epsilon$ for the appearance of multi-cluster states. In particular, we obtain the critical value $\epsilon_c$ above which the multi-cluster states cease to exist. In order to shed light on the nature of multi-cluster states which
are built out of one-cluster states, we review the following analytic
result for two-cluster states of splay type which was explicitly derived in Ref.~\onlinecite{BER19}. Suppose we have two groups
of splay cluster states, \textit{i.e.}, $R_{2}(\mathbf{a}_{\mu})=0$
for $\mu=1,2$. Then their combination leads to the following two-cluster
solution of the system~(\ref{eq:PhiDGL_general})--(\ref{eq:KappaDGL_general})
\begin{align*}
	\phi_{i,1}(t) & =\Omega_{1}t+a_{i,1},\quad i=1,\dots,N_{1}\\
	\phi_{i,2}(t) & =\Omega_{2}t+a_{i,2},\quad i=1,\dots,N_{2}\\
	\kappa_{ij,\mu\mu}(t) & =-\sin(a_{i,\mu}-a_{j,\mu}+\beta),\quad\mu=1,2\\
	\kappa_{ij,\mu\nu}(t) & =-\rho_{\mu\nu}\sin(\Delta\Omega_{\mu\nu}t+a_{i,\mu}-a_{j,\nu}+\beta-\psi_{\mu\nu}),
\end{align*}
where $\mu\ne\nu$, $\Delta\Omega_{\mu\nu}:=\Omega_{\mu}-\Omega_{\nu}$,
$\psi_{\mu\nu}:=\arctan(\Delta\Omega_{\mu\nu}/\epsilon),$
\[
\rho_{\mu\nu}:=\left(1+\left(\Delta\Omega_{\mu\nu}/\epsilon\right)^{2}\right)^{-\frac{1}{2}},
\]

\begin{align*}
	\Omega_{\mu}=\frac{1}{2}\left(n_{\mu}\cos(\alpha-\beta)+\rho_{\mu\nu}(1-n_{\mu})\cos(\alpha-\beta+\psi_{\mu\nu})\right),
\end{align*}

\begin{multline}
	\left(\Delta\Omega_{12}\right)_{1,2}=\frac{1}{2}\left(n_{1}-\frac{1}{2}\right)\cos(\alpha-\beta)\\
	\pm\frac{1}{2}\sqrt{\left(n_{1}-\frac{1}{2}\right)^{2}\cos^{2}(\alpha-\beta)-2\epsilon(2\epsilon+\sin(\alpha-\beta))},\label{eq:2Cluster_OmegaDiff}
\end{multline}
and $n_{1}:=N_{1}/N$.  

It is quite remarkable that in case of splay-type clusters the multi-cluster
solution can be explicitly given. For a proof we refer the reader to
Ref.~\onlinecite{BER19}.

\begin{figure*}
	\centering \includegraphics{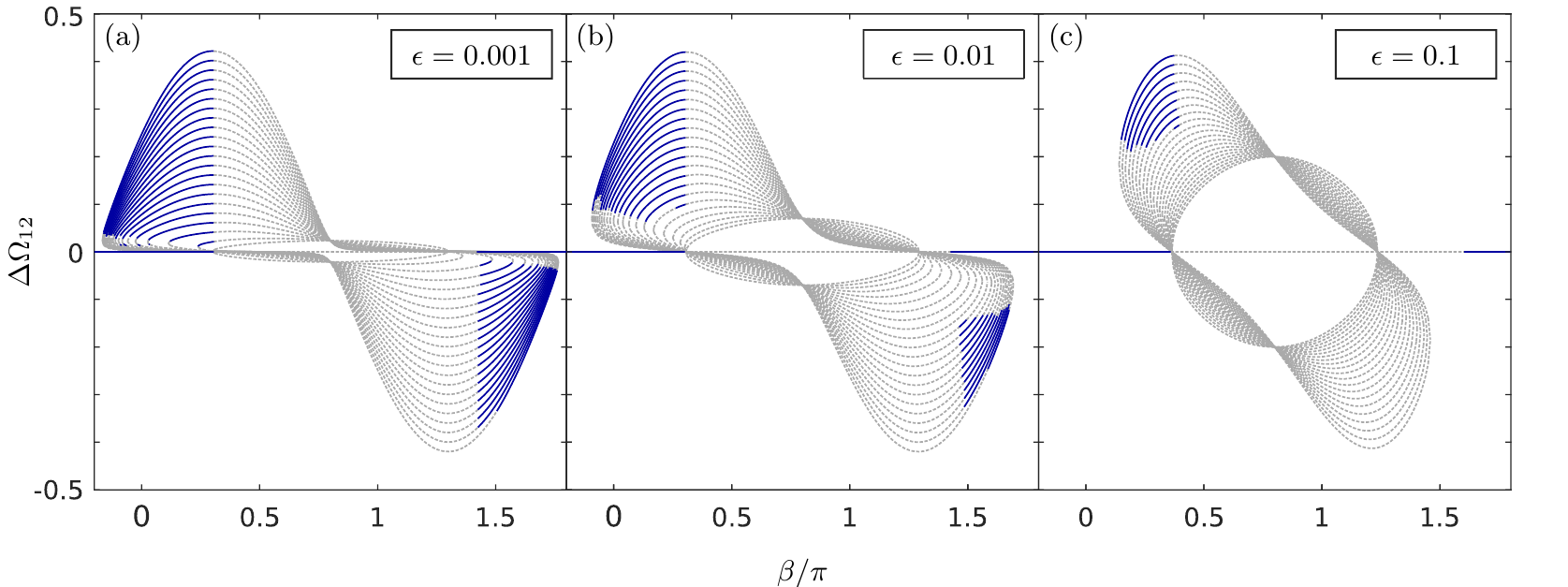}
	\caption{\label{fig:2Cluster_BifDia} All possible  one- and two-cluster solutions
		of splay type of system (\ref{eq:PhiDGL_general})--(\ref{eq:KappaDGL_general}).
		For fixed relative size of the first cluster $n_1 $, the frequency differences $\Delta\Omega_{12}(\beta)$ are displayed as a function of the system parameter $\beta$ corresponding
		to Eq.~\eqref{eq:2Cluster_OmegaDiff}. The dashed lines (gray) indicate
		unstable solutions while the solid lines (blue) indicate stable
		solutions. Parameter values $N=50$ and $\alpha=0.3\pi$ are fixed
		for all panels.}
\end{figure*}

With this, we can study directly the role
of several parameters for the existence of multi-cluster states. In
Figure~\ref{fig:2Cluster_BifDia}, solutions for Eq.~\eqref{eq:2Cluster_OmegaDiff}
are presented depending on the parameter $\beta$. The  number of oscillators
in the system is chosen as $N=50$. Each line $\Delta\Omega_{12}(\beta)$ in Fig.~\ref{fig:2Cluster_BifDia}
represents a frequency difference of two clusters for which the two-cluster
state of splay type exists with fixed relative number $n_{1}$ of oscillators in the first cluster.
Note that the number of possible two-cluster states increases proportionally
to the total number of oscillators $N$. Different panels show solutions
for different values of $\epsilon$. We note that the existence of those
two-cluster states depends only on the difference of $\gamma:=\alpha-\beta$,
see Eq.~\eqref{eq:2Cluster_OmegaDiff}. The necessary condition for
the existence of a two-cluster state reads  
\begin{align}
	\left(n_{1}-\frac{1}{2}\right)^{2}\cos^{2}\gamma>2\epsilon(2\epsilon+\sin\gamma).\label{eq:2Cl_splay_ExCond}
\end{align}

From Eq.~\eqref{eq:2Cl_splay_ExCond} we immediately see that the
value  of the time separation parameter $\varepsilon$ in system~\eqref{eq:PhiDGL_general}--\eqref{eq:KappaDGL_general}
is important for the existence of the multi-cluster states. This dependence
is in contrast to the findings for one-cluster states. First of all,
note that the left hand side of condition~\eqref{eq:2Cl_splay_ExCond}
is positive for  $\gamma\ne\pm\pi/2$. Hence, for all parameters,
there is a critical value $\epsilon_{c}$ such that there exists no
two-cluster state for $\epsilon>\epsilon_{c}$. Explicitly, we have
\begin{align}
	\epsilon_{c}=-\frac{1}{4}\sin\gamma+\frac{1}{2}\sqrt{\frac{1}{4}\sin^{2}\gamma+\left(n_{1}-\frac{1}{2}\right)^{2}\cos^{2}\gamma},\label{eq:Eps_critical}
\end{align}
which is illustrated in Fig.~\ref{fig:2Cluster_Ex_EpsCrit}. The
figure shows the critical value  $\epsilon_{c}$ depending on the
parameter $\gamma$ for different values of $n_{1}$. The function
possesses a global maximum with $\epsilon_{c}=0.5$. This means that
there is a particular requirement on the time separation in order
to have two-cluster states of splay type. Indeed, the adaptation of
the network has to be at most half as fast as the dynamics of the
oscillatory system. 

Further let us remark that the two-cluster state with equally sized
clusters $n_{1}=0.5$  exists only for $\alpha-\beta\in(\pi,2\pi)$,
\textit{i.e.}, $\epsilon_{c}=0$ for all $\alpha-\beta\in[0,\pi]$.

\begin{figure}
	\centering 
	\includegraphics{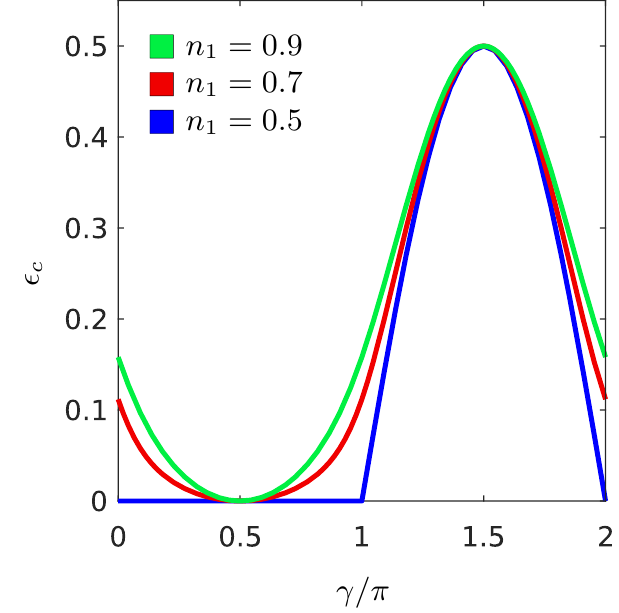}
	\caption{\label{fig:2Cluster_Ex_EpsCrit} For the case of two-cluster states
		of splay type, the critical value $\epsilon_{c}$ of time-separation
		parameter $\epsilon$ is plotted as a function of $\gamma=\alpha-\beta$
		for different cluster sizes $n_{1}=N_{1}/N$. The function is given
		explicitly by Eq.~\eqref{eq:Eps_critical}. }
\end{figure}

In Sec.~\ref{sec:Hcluster} we discussed that the combination of
one-cluster states to a multi-cluster state can result in modulated
dynamics of the oscillators additional to the linear growth. In fact,
this additional temporal behavior is due to the interaction of the
clusters. As we see in Fig.~\ref{fig:MC_states_types}(a), oscillators
interacting with a splay-type cluster will not be forced to perform
additional dynamics. This is the reason why we are able to derive
a closed analytic expression for multi-cluster states of splay type. It
is possible to determine the frequencies explicitly as in Eq.~\eqref{eq:2Cluster_OmegaDiff}.
Therefore, here the interaction between cluster causes only changes
in the collective frequencies which are small whenever $\epsilon$
is small.

In contrast to splay clusters, the interaction with antipodal cluster
leads to bounded modulation of the oscillator dynamics besides the constant-frequency motion. The modulations scale with $\epsilon$ and hence depend on the time-separation parameter. More rigorous results can be found in Ref.~\onlinecite{BER19}. In case of a mixed type two-cluster state, both interaction phenomena are present. The oscillators in the antipodal cluster interact with the splay cluster leading to no additional modulation. On the contrary, the phase oscillators of the splay cluster get additional modulation via the interaction with the antipodal cluster.

The one-cluster states of any size apparently serve as building blocks
for multi-cluster states.  However, not all possible multi-cluster
states are stable even though the building blocks are. The next section is devoted to this question of
stability.
\section{Stability of multi-cluster states}\label{sec:MClstability}
As mentioned above, the stability of one-cluster states is important
for the stability of multi-cluster states. For two weakly coupled
clusters, the stability of one-clusters serves as a necessary condition
for the stability of the two-cluster state. In figure~\ref{fig:2Cluster_BifDia}
the possible two-clusters of splay type are plotted. In addition,
for each of these solutions the stability is analysed numerically.
For this, we initialize the system~\eqref{eq:PhiDGL_general}\textendash \eqref{eq:KappaDGL_general}
on the corresponding two-cluster state and run the simulation for
$t=10000$ time units. After the simulation, we compare the initial
condition with the final state in order to determine the stability.
For each parameter value $\beta$ we colour the line blue (solid
line) whenever the two-cluster state is stable. Otherwise, the line
is gray (dashed line). An additional line at $\Delta\Omega_{12}=0$
is plotted corresponding to the one-cluster solution. The stability
of the one-cluster solution is determined analytically as in Fig.~\ref{fig:StabRegions}.
We notice that for $\epsilon=0.001$ a stable two-cluster state exists
for almost every relative cluster size $n_{1}$, while this is not
true for $\epsilon=0.01$ and even more so for $\epsilon=0.1$.

Another observation from Figs.~\ref{fig:2Cluster_BifDia}(a),(b) is
that the possible $\beta$-values where the two-cluster states can
be stable mainly correspond to the $\beta$ values where the one-cluster
state is stable. This is true for small values of $\epsilon$, however,
a careful inspection of Fig.~\ref{fig:2Cluster_BifDia}(c) for the
case of larger $\epsilon$, here $\epsilon=0.1$, shows that some two-cluster states
appear to be stable for a parameter region where the corresponding one-cluster
state is unstable. This can be explained as follows. According to
$\eqref{eq:PhiDGL_general}$, in the case of one-cluster states, the
inter-cluster interactions are summed over all $N$ oscillators of
the whole system. Additionally, the interactions are scaled with the
factor $1/N$. Therefore, the total interaction scales with $1$.
For two-cluster states, the inter-cluster interactions for each individual
cluster are only a sum over the $N_{\mu}$ ($\mu=1,2$) oscillators
whereas the scaling remains $1/N$. Hence, the total inter-cluster
interaction scales with $n_{\mu}=N_{\mu}/N$, the relative size of
the cluster. Therefore, the effective oscillatory system, when neglecting
the interaction to the other cluster, reads 
\begin{align*}
	\frac{\mathrm{d}\phi_{i,\mu}}{\mathrm{d}t} & =-\frac{n_{\mu}}{N_{\mu}}\sum_{j=1}^{N_{\mu}}\kappa_{ij,\mu\mu}\sin(\phi_{i,\mu}-\phi_{j,\mu}+\alpha),\\
	\frac{\mathrm{d}\kappa_{ij,\mu\mu}}{\mathrm{d}t} & =-\epsilon\left(\kappa_{ij,\mu\mu}+\sin(\phi_{i,\mu}-\phi_{j,\mu}+\beta)\right).
\end{align*}
This system is equivalent to ~\eqref{eq:PhiDGL_general}\textendash \eqref{eq:KappaDGL_general}
with $N_{\mu}$ oscillators by rescaling $\epsilon\mapsto\epsilon/n_{\mu}$.
Thus, the stability of the inter-cluster system has to be evaluated
with respect to the rescaled effective parameter $\epsilon_{\text{eff}}:=\epsilon/n_{\mu}$.
Since $n_{\mu}<1$ for $\mu=1,2$, we have $\epsilon_{\text{eff}}>\epsilon$.
As we have discussed, the stability for the one-cluster changes with
increasing $\epsilon$. With this, the influence of the cluster size
as well as the slight boundary shift in the regions of stability,
see Fig.~\ref{fig:2Cluster_BifDia}, can be explained.

Finally, we note why the equally-sized splay-clusters are not found to be stable. Indeed, from Eq.~\eqref{eq:2Cluster_OmegaDiff} we know that $2\epsilon+\sin(\alpha-\beta)<0$ is a necessary condition to have
such equally-sized ($n_{1}=1/2$) clusters. However, any one-cluster
splay state is unstable for $2\epsilon+\sin(\alpha-\beta)<0$ by Proposition~\ref{prop:NecCondSplay} in Appendix~\ref{sec:app_StabOneCluster}.
\section{Conclusion}\label{sec:conclusion}
In summary, we have studied a paradigmatic model of adaptively coupled phase oscillators. It is well known that various models of weakly coupled oscillatory systems can be reduced to coupled phase oscillators. Our study has revealed the impact of synaptic plasticity upon the collective dynamics of oscillatory systems. For this, we have implemented a simplified model which is able to describe the slow adaptive change of the network depending on the oscillatory states. The slow adaptation is controlled by a time-scale separation parameter.

We have described the appearance of several different frequency cluster states. Starting from random initial conditions, our numerical simulations show two different types of states. These are the splay and the antipodal type multi-cluster states. A third mixed type multi-cluster state is found by using mathematical methods described in detail in~\cite{BER19}. For all these states the collective motion of oscillators, the shape of the network, and the interaction between the frequency clusters is presented in detail. It turns out that the oscillators are able to form groups of strongly connected units. The interaction between the groups is weak compared to the interaction within the groups. The analysis of multi-cluster states reveals the building blocks for these states.

In particular, the following three types of relative equilibria form building blocks for multi-cluster states: splay, antipodal, and double antipodal. In order to understand the stability of the frequency cluster states, we perform a linear stability analysis for the relative equilibria. The stability of these states is rigorously described, and the impact of all parameters is shown. We prove that the double antipodal states are unstable in the whole parameter range. They appear to be saddle-points in the phase space which therefore cannot be building blocks for higher multi-cluster states. While the time-scale separation has no influence upon their existence, it plays an important role for the stability the relative equilibria. The regions of stability in parameter space are presented for different choices of the time-scale separation parameter. The singular limit ($\epsilon \to 0$) and the limit of instantaneous adaptation are analyzed. The latter shows that the stability region of the splay and the antipodal states divide the whole space into two equally sized regions without intersection. Instantaneous adaption cancels multistability of these states. The consideration of the singular limit shows that it differs from the case of no adaptation. Therefore, even for very slow adaptation, the oscillatory dynamics alone is not sufficient to describe the stability of the system.

Subsequently, the role of double antipodal states is discussed. We find that in a system of $3$ oscillators these states are transient states in a small heteroclinic network between antipodal and splay states. They appear to be metastable, i.e. observable for a relative long time and therefore are physically important transient states. Moreover an additional analysis for an ensemble of $100$ phase oscillators has revealed the importance of the double antipodal states for the global dynamics of the whole system.

For the splay clusters we analytically show the existence of two-cluster states. Remarkably, while the existence of the one-cluster states does not depend on the time-scale separation parameter, the multi-cluster states crucially depend on the time-scale separation. In fact, we provide an analysis showing that there exists a critical value for the time-scale separation. Moreover, we show that in the case of two-cluster states of splay type the adaption of the coupling weights must be at most half as fast as the dynamics of the oscillators. This fact is of crucial importance for comparing dynamical scenarios induced by short-term or long-term plasticity~\cite{Frohlich2016}.

The stability of two-cluster states is analyzed numerically and presented for different values of the time-scale separation parameter. By assuming weakly interacting clusters, we describe the stability of the two-cluster with the help of the analysis of one-cluster states. The simulations show that there are no stable two-cluster states with clusters of the same size. We provide an argument to understand this property of the system.

\begin{acknowledgments}
	The authors acknowledge financial support by the Deutsche Forschungsgemeinschaft
	(DFG, German Research Foundation) - Projects 411803875 (S.Y) and 308748074 (R.B.).
	V.N. and D.K. acknowledge the financial support of the Russian Foundation for Basic Research (Project Nos. 17-02-00874, 18-29-10040 and 18-02-00406).
	
\end{acknowledgments}

%

\appendix
\section{Stability of one-cluster states\label{sec:app_StabOneCluster}}
In order to study the local stability of one-cluster solutions described in Sec.~\ref{sec:blocks}, we linearise the system of differential equations (\ref{eq:PhiDGL_general})--(\ref{eq:KappaDGL_general})
around the phase locked states described by $\phi_i=\Omega t + a_i$  and $\kappa_{ij}=-\sin(a_i-a_j+\beta)$. We obtain
the following linearised system 
\begin{multline}\label{eq:Linearized_OneCl_phi}
	\frac{d}{dt}\delta\phi_{i}=\frac{1}{2N}\sum_{j=1}^{N}\sin(\beta-\alpha)\left(\delta\phi_{i}-\delta\phi_{j}\right)\\+\frac{1}{2N}\sum_{j=1}^{N}\cos(2(a_{i}-a_{j})+\alpha+\beta)\left(\delta\phi_{i}-\delta\phi_{j}\right)\\
	-\frac{1}{N}\sum_{j=1}^{N}\sin(a_{i}-a_{j}+\alpha)\delta\kappa_{ij},
\end{multline}
and 
\begin{align}\label{eq:Linearized_OneCl_kappa}
	\frac{d}{dt}\delta\kappa_{ij} =-\epsilon\left(\delta\kappa_{ij}+\cos(a_{i}-a_{j}+\beta)\left(\delta\phi_{i}-\delta\phi_{j}\right)\right),
\end{align}
Note that this set of equations can be brought into the following block form
\begin{align}\label{eq:LinearisationBlockForm}
	\frac{\mathrm{d}}{\mathrm{d}t}
	\begin{pmatrix}
		\delta\bm{\phi}\\
		\delta\kappa
	\end{pmatrix}
	=
	\begin{pmatrix}
		A & B\\
		C & -\epsilon\mathbb{I}_{N^2}
	\end{pmatrix} 
	\begin{pmatrix}
		\delta\bm{\phi}\\
		\delta\kappa
	\end{pmatrix}
\end{align}
where $\left(\delta\bm{\phi}\right)^{T}=\left(\delta\phi_{1},\dots,\delta\phi_{N}\right)$, $\left(\delta\kappa\right)^{T}=\left(\delta\kappa_{11},\dots,\delta\kappa_{1N},\delta\kappa_{21},\dots,\delta\kappa_{NN}\right)$, $B=\begin{pmatrix}B_1 & \cdots & B_N
\end{pmatrix}$, $C=\begin{pmatrix}C_1\\ \vdots \\C_N
& \\
\end{pmatrix}$, and $A$, $B_n$, $C_n$ are $N\times N$ matrices with $n=1,\dots,N$. The elements of the block matrices read
\begin{align*}
a_{ij} & =\begin{cases}
\!\begin{aligned}
-\frac{1}{2}\sin(\alpha-\beta)-\frac{1}{N}\sin(\beta)\cos(\alpha) \\
+\frac{1}{2N}\sum_{k=1}^{N}\sin(2(a_{i}-a_{k})+\alpha+\beta),
\end{aligned} 
 & i=j\\
\frac{1}{2N}\left(\sin(\alpha-\beta)-\sin(2(a_{i}-a_{j})+\alpha+\beta)\right), & i\ne j
\end{cases}\\
b_{ij;n} & =\begin{cases}
-\frac{1}{N}\sin(a_{n}-a_{j}+\alpha), & i=n\\
0, & \text{otherwise}
\end{cases}\\
c_{ij;n} & =\begin{cases}
0, & j=n,i=j\\
-\epsilon\cos(a_{n}-a_{i}+\beta), & j=n,i\ne j\\
\epsilon\cos(a_{n}-a_{i}+\beta), & j\ne n,i=j\\
0, & \text{otherwise}
\end{cases}.
\end{align*}
Throughout this appendix we will make use of Schur's complement~\cite{BOY04} in order to simplify characteristic equations. In particular, any $m\times m$ matrix $M$ in the $2\times2$ block form can be written as
\begin{multline}\label{eq:SchurComplement}
M =\begin{pmatrix}A & B\\
C & D
\end{pmatrix} \\
=\begin{pmatrix}\mathbb{I}_{p} & BD\\
0 & \mathbb{I}_{q}
\end{pmatrix}\begin{pmatrix}A-BD^{-1}C & 0\\
0 & D
\end{pmatrix}\begin{pmatrix}\mathbb{I}_{p} & 0\\
D^{-1}C & \mathbb{I}_{q}
\end{pmatrix}
\end{multline}
where $A$ is a $p\times p$ matrix and $D$ is an invertible $q\times q$
matrix. The matrix $A-BD^{-1}C$ is called Schur's
complement. A simple formula for the determinant of $M$ can be derived
with this decomposition in~\eqref{eq:SchurComplement}
\begin{align*}
\det(M) & =\det(A-BD^{-1}C)\cdot\det(D).
\end{align*}
This result is important for the subsequent stability analysis. Note that in the following an asterisk indicates the complex conjugate.
\begin{prop} \label{prop:LinerizedOneCluster_RW_Spectrum}
	Suppose we have $\mathbf{a}_{k}=(0,\frac{2\pi}{N}k,\dots,(N-1)\frac{2\pi}{N}k)^{T}$
	and the characteristic equation of the linear system (\ref{eq:Linearized_OneCl_phi})--(\ref{eq:Linearized_OneCl_kappa})
	then the set of eigenvalues $L$ are as follows.
	\begin{enumerate}
		\item (in-phase and anti-phase synchrony) If $k=0$ or $k=N/2$, 
		\[
		L=\left\{ \left(0\right)_{\text{1}},\left(-\epsilon\right)_{(N-1)N+1},\left(\lambda_{1}\right)_{N-1},\left(\lambda_{2}\right)_{N-1}\right\} 
		\]
		where $\lambda_{1}$ and $\lambda_{2}$ solve $\lambda^{2}+\left(\epsilon-\cos(\alpha)\sin(\beta)\right)\lambda-\epsilon\sin(\alpha+\beta)=0$.
		\item (incoherent rotating-wave) If $k\ne0,N/2,N/4,3N/4,$ the spectrum
		is 
		\begin{multline*}
			L =\left\{ \left(0\right)_{N-2},\left(-\epsilon\right)_{(N-1)N+1},\left(-\frac{\sin(\alpha-\beta)}{2}-\epsilon\right)_{N-3},\right.\\
			\left.\left(\vartheta_{1}\right)_{1},\left(\vartheta_{2}\right)_{1},\left(\vartheta_{1}^{*}\right)_{1},\left(\vartheta_{2}^{*}\right)_{1}\right\} 
		\end{multline*}
		where $\vartheta_{1}$ and $\vartheta_{2}$ solve $\vartheta^{2}+\left(\epsilon+\frac{1}{2}\sin(\alpha-\beta)-\frac{1}{4}\mathrm{i}e^{\mathrm{i}(\alpha+\beta)}\right)\vartheta-\frac{\epsilon}{2}\mathrm{i}e^{\mathrm{i}(\alpha+\beta)}=0$.
		\item (4-rotating-wave state) If $k=N/4,3N/4$, the spectrum is 
		\begin{multline*}
			L =\left\{ \left(0\right)_{N-1},\left(-\epsilon\right)_{(N-1)N+1},\right.\\
			\left. \left(-\frac{\sin(\alpha-\beta)}{2}-\epsilon\right)_{N-2},\left(\lambda_{1}\right)_{1},\left(\lambda_{2}\right)_{1}\right\} 
		\end{multline*}
		where $\lambda_{1}$ and $\lambda_{2}$ solve $\lambda^{2}+\left(\epsilon+\sin(\alpha)\cos(\beta)\right)\lambda+\epsilon\sin(\alpha+\beta)=0.$
	\end{enumerate}
	Here, the multiplicities for each eigenvalue are given as subscripts.
\end{prop}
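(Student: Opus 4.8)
The plan is to combine the block structure \eqref{eq:LinearisationBlockForm} with Schur's complement and then to exploit the cyclic symmetry of the rotating-wave configurations $\mathbf a_{k}$, which makes all the relevant $N\times N$ blocks circulant. First, the characteristic matrix of \eqref{eq:LinearisationBlockForm} has lower-right block $-(\epsilon+\lambda)\mathbb I_{N^{2}}$, which is invertible for $\lambda\neq-\epsilon$, so the determinant formula following \eqref{eq:SchurComplement} gives
\begin{equation*}
\det\!\begin{pmatrix}A-\lambda\mathbb I_{N} & B\\ C & -(\epsilon+\lambda)\mathbb I_{N^{2}}\end{pmatrix}=(-1)^{N^{2}}(\epsilon+\lambda)^{N^{2}-N}\det\!\bigl((\epsilon+\lambda)(A-\lambda\mathbb I_{N})+BC\bigr).
\end{equation*}
Both sides are polynomials in $\lambda$, so the identity holds everywhere. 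The factor $(\epsilon+\lambda)^{N^{2}-N}$ accounts for the bulk of the eigenvalue $-\epsilon$, and it remains to analyse the degree-$2N$ polynomial $P(\lambda):=\det\bigl((\epsilon+\lambda)(A-\lambda\mathbb I_{N})+BC\bigr)$.

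Next I would compute $BC=\sum_{n}B_{n}C_{n}$ from the explicit entries $b_{ij;n},c_{ij;n}$, obtaining off-diagonal entries $(BC)_{ij}=-\tfrac{\epsilon}{N}\sin(a_{i}-a_{j}+\alpha)\cos(a_{i}-a_{j}+\beta)$ and vanishing row sums. The identity $\sin x\cos y=\tfrac12[\sin(x+y)+\sin(x-y)]$ (and, by comparison with the entries $a_{ij}$, the analogous rewriting of $A$) shows that both $A$ and $BC$ are generalized graph Laplacians whose weights depend only on $a_{i}-a_{j}$. For $\mathbf a_{k}$ one has $a_{i}-a_{j}=\tfrac{2\pi k}{N}(i-j)$, so $A$ and $BC$ are circulant, hence simultaneously diagonalized by the discrete Fourier transform; writing $\hat A_{m},\widehat{BC}_{m}$ ($m=0,\dots,N-1$) for their eigenvalues, $P(\lambda)=\prod_{m=0}^{N-1}\bigl[-\lambda^{2}+(\hat A_{m}-\epsilon)\lambda+\epsilon\hat A_{m}+\widehat{BC}_{m}\bigr]$. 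Thus each Fourier mode $m$ contributes the two roots of $\lambda^{2}+(\epsilon-\hat A_{m})\lambda-\epsilon\hat A_{m}-\widehat{BC}_{m}=0$, and $\lambda=-\epsilon$ is an additional root precisely for those $m$ with $\widehat{BC}_{m}=0$.

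The heart of the proof is the explicit evaluation of the Fourier sums. With $\omega=e^{2\pi\mathrm i/N}$ one has $\hat A_{m}=\sum_{d=1}^{N-1}w_{A}(d)(1-\omega^{md})$, where $w_{A}(d)=\tfrac{1}{2N}[\sin(\tfrac{4\pi kd}{N}+\alpha+\beta)-\sin(\alpha-\beta)]$, and the analogous formula for $\widehat{BC}_{m}$ with $+\sin(\alpha-\beta)$ and an extra factor $\epsilon$. Expanding the sine through $\omega^{\pm2kd}$ and using $\sum_{d=1}^{N-1}\omega^{\ell d}=N$ if $\ell\equiv0\bmod N$ and $-1$ otherwise, the result depends only on whether $2k\equiv0$, $\equiv\pm m$, or (when $4k\equiv0$) $\equiv N/2\pmod N$ --- exactly the trichotomy $k\in\{0,N/2\}$ / $k\in\{N/4,3N/4\}$ / generic $k$. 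For generic $k$: the mode $m=0$ has $\hat A_{0}=\widehat{BC}_{0}=0$, giving $\lambda=0$ and $\lambda=-\epsilon$; the $N-3$ modes with $m\not\equiv0,\pm2k$ (three distinct residues for generic $k$) have $\hat A_{m}=-\tfrac12\sin(\alpha-\beta)$ and $\widehat{BC}_{m}=\tfrac{\epsilon}{2}\sin(\alpha-\beta)$, so their quadratic factors as $\lambda(\lambda+\tfrac12\sin(\alpha-\beta)+\epsilon)$; and the conjugate pair $m\equiv\pm2k$ yields, for one of the two, $\hat A_{m}=-\tfrac12\sin(\alpha-\beta)+\tfrac14\mathrm ie^{\mathrm i(\alpha+\beta)}$ together with $-\epsilon\hat A_{m}-\widehat{BC}_{m}=-\tfrac{\epsilon}{2}\mathrm ie^{\mathrm i(\alpha+\beta)}$, i.e.\ exactly the $\vartheta$-equation, the other giving the complex conjugate. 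Collecting these together with the $N^{2}-N$ copies of $-\epsilon$ from the Schur reduction reproduces case~2. Cases~1 and~3 are structurally the same: when $2k\equiv0$ the weight $w_{A}(d)$ is constant, $A$ equals $\cos\alpha\sin\beta$ times the complete-graph Laplacian, and all $N-1$ nonzero modes give $\lambda^{2}+(\epsilon-\cos\alpha\sin\beta)\lambda-\epsilon\sin(\alpha+\beta)=0$; when $2k\equiv N/2$ the weight alternates as $(-1)^{d}$, only the mode $m\equiv N/2$ is special, giving $\lambda^{2}+(\epsilon+\sin\alpha\cos\beta)\lambda+\epsilon\sin(\alpha+\beta)=0$, while the remaining $N-2$ nonzero modes again factor as $\lambda(\lambda+\tfrac12\sin(\alpha-\beta)+\epsilon)$. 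In each case the multiplicities add up to $N^{2}+N$, as required.

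I expect the main obstacle to be precisely this last step: the conceptual part (Schur complement, circulant $\Rightarrow$ Fourier diagonalization) is routine, but evaluating the two Fourier sums correctly over the four relevant residue classes of $m$, keeping the constant and the alternating special cases apart, matching the complex quadratic arising from $m\equiv\pm2k$ to the stated $\vartheta$-equation, and verifying that all the multiplicities sum to $N^{2}+N$ will require careful bookkeeping.
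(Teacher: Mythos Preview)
Your proposal is correct. The paper does not give a self-contained proof of this proposition---it simply refers to Ref.~\onlinecite{BER19}---but the machinery developed in the appendix (Lemma~\ref{lem:PhaseLockSchur} for the Schur reduction to the $N\times N$ problem $\det\bigl((\epsilon+\lambda)(A-\lambda\mathbb I_N)+BC\bigr)=0$, together with the circulant/block-circulant diagonalisation used in Proposition~\ref{prop:stab_double_antipodal}) is exactly the route you take, and your Fourier-mode analysis recovers each of the three cases with the stated multiplicities.
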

\begin{proof}
	The proof can be found in Ref~\onlinecite{BER19}.
\end{proof}
So far, we have found the Lyapunov coefficients for the rotating-wave states. The following two Lemmata are needed to describe the stability of antipodal, $4$-phase-cluster, and double antipodal states as well. 
\begin{lem}\label{lem:BlockCirculantMatrix}
	Suppose $M$ is a block square matrix
	of the form
	\begin{align*}
		M & =\left(\begin{array}{cc}
			A & m_{1}\hat{1}_{p,q}\\
			m_{2}\hat{1}_{q,p} & B
		\end{array}\right)
	\end{align*}
	where $A$ is a circulant $p\times p$ matrix, $B$ is a circulant $q\times q$ matrix, $\hat{1}_{p,q}$ is $p\times q$ where all entries are $1$ and $m_{1},m_{2}\in\mathbb{R}$. Then the eigenvector-eigenvalue	pairs are given by 
		\begin{align}
			&\left(\lambda_{k}^{0},\dots,\lambda_{k}^{p-1},0,\dots,0\right)^{T},  \,\mu_{k}=\sum_{l=0}^{p-1}a_{1\left(1+l\right)}\lambda_{k}^{l} \label{eq:EVEQpair_1}\\
			&\left(0,\dots,0,\rho_{l}^{0},\dots,\rho_{l}^{q-1}\right)^{T},  \,\nu_{l}=\sum_{l=0}^{q-1}b_{1\left(1+l\right)}\rho_{l}^{l} \label{eq:EVEQpair_2}\\
			&\left(1,\dots,1,a_{1},\dots a_{1}\right)^{T}, \,\bar{\mu}=\mu_{0}+m_{1}qa_{1} \label{eq:EVEQpair_3}\\
			&\left(1,\dots,1,a_{2},\dots,a_{2}\right)^{T}, \,\bar{\nu}=\mu_{0}+m_{1}qa_{2} \label{eq:EVEQpair_4}
		\end{align}
		with $\lambda_{k}=e^{\mathrm{i}\frac{2\pi}{p}k}$ and $\rho_{l}=e^{\mathrm{i}\frac{2\pi}{q}l}$ for $k=1,\dots,p-1$ and $l=1,\dots,q-1$, respectively, and $a_{1}$ and $a_{2}$ solve the equation 
		\begin{align*}
			a^{2}+\frac{\mu_{0}-\nu_{0}}{m_{1}q}a-\frac{m_{2}p}{m_{1}q} & =0
		\end{align*}
		with $\mu_{0}=\sum_{j=1}^{p}a_{1j}$ and $\nu_{0}=\sum_{j=1}^{q}b_{1j}$.
\end{lem}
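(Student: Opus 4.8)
The plan is to exhibit the listed $p+q$ vectors as eigenvectors by direct substitution and then show that they span $\mathbb{C}^{p+q}$, which forces the spectrum to be exactly the given list. The only prerequisites are the classical facts about circulant matrices: an $n\times n$ circulant $A$ has the Fourier vectors $w_k=(\lambda_k^0,\dots,\lambda_k^{n-1})^T$, $\lambda_k=e^{\mathrm{i}2\pi k/n}$, as eigenvectors with eigenvalue $\mu_k=\sum_{l=0}^{n-1}a_{1(1+l)}\lambda_k^l$; in particular every row of $A$ sums to $\mu_0$, so $A\mathbf{1}_p=\mu_0\mathbf{1}_p$, and likewise $B\mathbf{1}_q=\nu_0\mathbf{1}_q$.

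First I would verify the two single-block families. For the vector $v$ in \eqref{eq:EVEQpair_1} with $1\le k\le p-1$, the upper block of $Mv$ equals $Aw_k=\mu_k w_k$, while the lower block equals $m_2\hat 1_{q,p}w_k=m_2\big(\sum_{l=0}^{p-1}\lambda_k^l\big)\mathbf{1}_q=0$ because $\lambda_k^p=1$ and $\lambda_k\ne 1$; hence $Mv=\mu_k v$. The mirror computation with $B$ gives \eqref{eq:EVEQpair_2}. For the mixed vectors $v=(\mathbf{1}_p,a\mathbf{1}_q)^T$ I would use the constant row-sum property together with $\hat 1_{p,q}\mathbf{1}_q=q\mathbf{1}_p$ and $\hat 1_{q,p}\mathbf{1}_p=p\mathbf{1}_q$: the upper block of $Mv$ is $(\mu_0+m_1 q a)\mathbf{1}_p$ and the lower block is $(m_2 p+\nu_0 a)\mathbf{1}_q$. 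Imposing $Mv=\bar\mu v$ reads off $\bar\mu=\mu_0+m_1 q a$ from the upper block; substituting into the lower block yields the quadratic $a^2+\frac{\mu_0-\nu_0}{m_1 q}a-\frac{m_2 p}{m_1 q}=0$, whose two roots $a_1,a_2$ produce \eqref{eq:EVEQpair_3}--\eqref{eq:EVEQpair_4}.

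It remains to argue completeness. The $(p-1)+(q-1)+2=p+q$ vectors above are linearly independent: projecting a vanishing combination onto the first $p$ coordinates expresses $0$ in terms of $w_1,\dots,w_{p-1},\mathbf{1}_p$, the full Fourier basis of $\mathbb{C}^p$, so the coefficients of \eqref{eq:EVEQpair_1} vanish and the coefficients $e_1,e_2$ of \eqref{eq:EVEQpair_3}--\eqref{eq:EVEQpair_4} satisfy $e_1+e_2=0$; the last $q$ coordinates do the same for \eqref{eq:EVEQpair_2} and in addition give $e_1 a_1+e_2 a_2=0$, hence $e_1=e_2=0$ provided $a_1\ne a_2$. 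Thus, since $M$ is $(p+q)\times(p+q)$, these vectors form an eigenbasis and the spectrum is exactly the list given. The one case requiring extra care is the degenerate one $a_1=a_2$ (vanishing discriminant, i.e. $(\mu_0-\nu_0)^2=-4m_1 m_2 pq$), where the two mixed vectors coincide and $M$ need not be diagonalisable; there I would instead conjugate $M$ by the block transform $F_p\oplus F_q$, under which $A$ and $B$ become diagonal and the rank-one off-diagonal blocks $\hat 1_{p,q},\hat 1_{q,p}$ couple only the two DC Fourier modes, so that $M$ is similar to $\mathrm{diag}(\mu_1,\dots,\mu_{p-1})\oplus\mathrm{diag}(\nu_1,\dots,\nu_{q-1})\oplus\left(\begin{smallmatrix}\mu_0 & m_1 p\\ m_2 q & \nu_0\end{smallmatrix}\right)$ and $\det(\mu I-M)=\prod_{k=1}^{p-1}(\mu-\mu_k)\prod_{l=1}^{q-1}(\mu-\nu_l)\big[(\mu-\mu_0)(\mu-\nu_0)-m_1 m_2 pq\big]$, the last factor having roots $\bar\mu=\mu_0+m_1 q a_{1,2}$. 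I expect the bookkeeping of the factors $p$ and $q$ produced by $\hat 1_{p,q}\mathbf{1}_q=q\mathbf{1}_p$ versus $\hat 1_{q,p}\mathbf{1}_p=p\mathbf{1}_q$ to be the only genuine pitfall; everything else is routine.
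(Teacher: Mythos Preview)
Your proof is correct and follows essentially the same route as the paper: direct verification that each listed vector is an eigenvector, using the circulant Fourier eigenbasis together with the vanishing geometric sum $\sum_{l=0}^{p-1}\lambda_k^l=0$ for $k\neq 0$, and then solving the two-line system for the mixed vectors $(\mathbf{1}_p,a\mathbf{1}_q)^T$. You actually go beyond the paper by supplying a completeness (linear-independence) argument and by treating the degenerate case $a_1=a_2$ via the block Fourier conjugation $F_p\oplus F_q$; the paper's proof omits both of these and stops after the verification step.
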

\begin{proof}
	We can prove the Lemma by direct calculation and find
	\begin{multline*}
		M\left(\lambda_{k}^{0},\dots,\lambda_{k}^{p-1},0,\dots,0\right)^{T}\\=\left(\begin{array}{c}
			A\left(\lambda_{k}^{0},\dots,\lambda_{k}^{p-1}\right)^{T}
			m_{2}\sum_{l=0}^{p-1}\lambda_{k}^{l}
		\end{array}\right)\\
		=\mu_{k}\left(\lambda_{k}^{0},\dots,\lambda_{k}^{p-1},0,\dots,0\right)^{T}.
	\end{multline*}
	Here, we use that $A$ is a circulant matrix and that $\sum_{l=0}^{p-1}\lambda_{k}^{l}=0$
	for all $k=1,\dots,p-1$. Analogous arguments hold for \eqref{eq:EVEQpair_2}. The last two eigenvector-eigenvalue pairs \eqref{eq:EVEQpair_3}--\eqref{eq:EVEQpair_4} can be obtained by
	\begin{multline*}
		M \left(1,\dots,1,a,\dots,a\right)^{T}\\=\left(\begin{array}{c}
			A\left(1,\dots,1\right)^{T}+m_{1}qa\left(1,\dots,1\right)^{T}\\
			\frac{m_{2}p}{a}\left(1,\dots,1\right)^{T}+aB\left(1,\dots,1\right)^{T}
		\end{array}\right)\\
		=\left(\begin{array}{c}
			\mu_{0}+m_{1}qa\\
			\frac{m_{2}p}{a}+\nu_{0}
		\end{array}\right)\left(1,\dots,1,a,\dots,a\right)^{T}
	\end{multline*}
	which solves the eigenvalue problem if $a$ is chosen to be either
	$a_{1}$ or $a_{2}$.
\end{proof}
\begin{lem}\label{lem:PhaseLockSchur}
	Suppose we have a phase locked state with phases $a_i\in[0,2\pi)$. Then, the solution for the characteristic equations corresponding to the linearised system~\eqref{eq:Linearized_OneCl_phi}--\eqref{eq:Linearized_OneCl_kappa} are given by $\lambda=-\epsilon$ with multiplicity $N^2-N$ and by the solution of the following set of equations
	\begin{align*}
		\det\left(\left(A-\lambda\mathbb{I}_{N}\right)\left(\epsilon+\lambda\right)+BC\right) = 0.
	\end{align*}
\end{lem}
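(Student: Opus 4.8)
The plan is to apply the determinant identity following from Schur's complement, Eq.~\eqref{eq:SchurComplement}, directly to the characteristic matrix of the block system~\eqref{eq:LinearisationBlockForm}. Writing
\begin{align*}
	M-\lambda\mathbb{I}_{N+N^2}=\begin{pmatrix}A-\lambda\mathbb{I}_N & B\\ C & -(\epsilon+\lambda)\mathbb{I}_{N^2}\end{pmatrix},
\end{align*}
the lower-right block $D=-(\epsilon+\lambda)\mathbb{I}_{N^2}$ is invertible precisely when $\lambda\neq-\epsilon$, with $D^{-1}=-(\epsilon+\lambda)^{-1}\mathbb{I}_{N^2}$. First I would use this to evaluate the Schur complement $A-\lambda\mathbb{I}_N-BD^{-1}C=A-\lambda\mathbb{I}_N+(\epsilon+\lambda)^{-1}BC$, where, because $D^{-1}$ is a scalar multiple of the identity, the only block product that survives is $BC=\sum_{n=1}^{N}B_nC_n$, an $N\times N$ matrix.

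Next I would assemble the pieces. From $\det(M-\lambda\mathbb{I})=\det\bigl(A-\lambda\mathbb{I}_N-BD^{-1}C\bigr)\cdot\det(D)$, together with $\det(D)=(-(\epsilon+\lambda))^{N^2}$ and the elementary identity $\det\bigl(A-\lambda\mathbb{I}_N+(\epsilon+\lambda)^{-1}BC\bigr)=(\epsilon+\lambda)^{-N}\det\bigl((A-\lambda\mathbb{I}_N)(\epsilon+\lambda)+BC\bigr)$ (pulling the scalar $\epsilon+\lambda$ out of each of the $N$ rows), one obtains
\begin{align*}
	\det(M-\lambda\mathbb{I})=(-1)^{N^2}(\epsilon+\lambda)^{N^2-N}\det\bigl((A-\lambda\mathbb{I}_N)(\epsilon+\lambda)+BC\bigr).
\end{align*}
Setting this to zero yields the eigenvalue $\lambda=-\epsilon$ with multiplicity $N^2-N$ from the explicit factor, together with the reduced characteristic equation $\det\bigl((A-\lambda\mathbb{I}_N)(\epsilon+\lambda)+BC\bigr)=0$, which is exactly the claimed statement (note $(A-\lambda\mathbb{I}_N)(\epsilon+\lambda)=(\epsilon+\lambda)(A-\lambda\mathbb{I}_N)$ since $\epsilon+\lambda$ is a scalar).

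The one point that needs care is that the Schur factorisation is legitimate only where $D$ is invertible, i.e. for $\lambda\neq-\epsilon$, so strictly the displayed identity is first established on $\mathbb{C}\setminus\{-\epsilon\}$; since both sides are polynomials in $\lambda$ and agree there, they agree identically, which is what allows reading off the root $\lambda=-\epsilon$ and its multiplicity from the factor $(\epsilon+\lambda)^{N^2-N}$ (if the reduced factor happened to vanish at $-\epsilon$ as well, this would merely add further copies of that root, consistent with the statement). I do not expect a genuine obstacle here — the computation is a routine application of the Schur determinant formula — and the substantive work lies in the subsequent corollaries, where $A$ and $BC$ are specialised to the antipodal, $4$-phase, and double antipodal configurations and Lemma~\ref{lem:BlockCirculantMatrix} is applied to the resulting block-circulant matrices.
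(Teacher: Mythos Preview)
Your argument is correct and follows exactly the approach the paper takes: the paper's own proof consists of a single sentence, ``Applying Schur's decomposition~\eqref{eq:SchurComplement} to the linearised system in the block form~\eqref{eq:LinearisationBlockForm} yields the result,'' and you have simply written out the details of that computation, including the polynomial-continuation step that justifies reading off the factor at $\lambda=-\epsilon$.
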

\begin{proof}
	Applying Schur's decomposition~\eqref{eq:SchurComplement} to the linearised system in the block form~\eqref{eq:LinearisationBlockForm} yields the result.
\end{proof}
\begin{prop}\label{prop:stab_double_antipodal}
	Suppose we have a state with phases $a_i\in\{0,\pi,\psi,\psi+\pi\}$ where $i=1,\dots,N$. Further set $q_1=Q_1/N$ and $q_2=Q_2/N$, where $Q_1$ and $Q_2$ denote the numbers of phases which are either $0$ or $\pi$ and $\psi$ or $\psi+\pi$, respectively. Then, the linear system (\ref{eq:Linearized_OneCl_phi})--(\ref{eq:Linearized_OneCl_kappa}) possesses the following set $L$ of eigenvalues
	\begin{multline*}
		L=\left\{ \left(0\right)_{\text{1}},\left(-\epsilon\right)_{(N-1)N+1},\left(\lambda_{1}\right)_{N_1-1},\left(\lambda_{2}\right)_{N_1-1},\right.\\
		\left. \left(\vartheta_1\right)_{N_2-1}, \left(\vartheta_2\right)_{N_2-1}, \left(\rho_1\right)_1, \left(\rho_2\right)_1\right\}
	\end{multline*}
	where $\lambda_{1}$ and $\lambda_{2}$ solve 
	\begin{multline*}
		\lambda^2+\frac{1}{2}\left(\sin(\alpha-\beta)-q_{1}\sin(\alpha+\beta)\right.\\
		\left.-q_{2}\sin(-2\psi+\alpha+\beta)+2\epsilon\right)\lambda\\
		-\epsilon q_{1}\sin(\alpha+\beta)-\epsilon q_{2}\sin(-2\psi+\alpha+\beta)=0,
	\end{multline*}
	$\vartheta_1$ and $\vartheta_2$ solve
	\begin{multline*}
		\vartheta^2+\frac{1}{2}\left(\sin(\alpha-\beta)-q_{1}\sin(2\psi+\alpha+\beta)\right.\\
		\left.-q_2\sin(\alpha+\beta)+2\epsilon\right)\vartheta\\
		-\epsilon q_{1}\sin(2\psi+\alpha+\beta)-\epsilon q_{2}\sin(\alpha+\beta)=0,
	\end{multline*}
	as well as $\rho_1$ and $\rho_2$ solve
	\begin{multline*}
		\rho^2+\frac{1}{2}\left(\sin(\alpha-\beta)-q_{1}\sin(2\psi+\alpha+\beta)\right.\\
		\left.-q_2\sin(-2\psi+\alpha+\beta)+2\epsilon\right)\rho\\
		-\epsilon q_{1}\sin(2\psi+\alpha+\beta)-\epsilon q_{2}\sin(-2\psi+\alpha+\beta)=0.
	\end{multline*}
	The multiplicities for each eigenvalue are given as subscripts.
\end{prop}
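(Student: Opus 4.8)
The plan is to collapse the $(N+N^{2})$-dimensional linearised problem to an $N\times N$ determinant and then exploit that the double antipodal ansatz involves only two ``effective phases'', so that Lemma~\ref{lem:BlockCirculantMatrix} applies verbatim. First I would take the linearised system \eqref{eq:Linearized_OneCl_phi}--\eqref{eq:Linearized_OneCl_kappa} in the block form \eqref{eq:LinearisationBlockForm} and apply Lemma~\ref{lem:PhaseLockSchur}. This at once yields $\lambda=-\epsilon$ with multiplicity $N^{2}-N$ and reduces the remaining spectrum to the zeros of $\det M(\lambda)=0$, where $M(\lambda):=(A-\lambda\mathbb{I}_{N})(\epsilon+\lambda)+BC$; everything afterwards is about this single $N\times N$ matrix.

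Next I would compute $BC=\sum_{n=1}^{N}B_{n}C_{n}$ from the explicit block entries $b_{ij;n}$ and $c_{ij;n}$. Using $\sin x\cos y=\tfrac12(\sin(x+y)+\sin(x-y))$ one finds $(BC)_{i\ell}=-\tfrac{\epsilon}{2N}\bigl(\sin(2(a_{i}-a_{\ell})+\alpha+\beta)+\sin(\alpha-\beta)\bigr)$ for $\ell\neq i$, and the diagonal of $BC$ is then pinned down by its vanishing row sums. Combined with the matrix $A$ from the appendix — where the term $\tfrac1N\sin\beta\cos\alpha$ combines with the $k=i$ summand of $\tfrac1{2N}\sum_{k}\sin(2(a_{i}-a_{k})+\alpha+\beta)$ into a clean $\tfrac1{2N}\sin(\alpha-\beta)$, so that $A$ too has zero row sums — this gives the two structural facts I need: $M(\lambda)\mathbf{1}=-\lambda(\epsilon+\lambda)\mathbf{1}$, and every entry of $M(\lambda)$ depends on the phases only through $2a_{i}\bmod 2\pi$.

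Because $a_{i}\in\{0,\pi,\psi,\psi+\pi\}$, the value $2a_{i}$ equals $0$ for the $Q_{1}$ oscillators of ``type~$1$'' ($a_{i}\in\{0,\pi\}$) and $2\psi$ for the $Q_{2}$ oscillators of ``type~$2$''. Ordering type~$1$ before type~$2$, $M(\lambda)$ has exactly the shape of Lemma~\ref{lem:BlockCirculantMatrix}: each diagonal block is of the form $cI+c'\hat{1}$ (hence circulant) and the two off-diagonal blocks are constant. The Lemma then delivers the eigenvalues of $M(\lambda)$: the non-constant Fourier modes of the type-$1$ block contribute $d_{1}-o_{11}$ with multiplicity $Q_{1}-1$, those of the type-$2$ block contribute $d_{2}-o_{22}$ with multiplicity $Q_{2}-1$ (here $d_{m},o_{mm}$ are the diagonal/off-diagonal entries of the respective block), and the two ``mixed'' modes are the roots of the Lemma's scalar quadratic. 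One root is $a=1$, whose eigenvector is $\mathbf{1}$ with eigenvalue $-\lambda(\epsilon+\lambda)$ (this uses zero row sums and the fact that every row of $M(\lambda)$ has the same total sum); by Vieta the other root is then determined and the remaining eigenvalue equals $\bar\nu=d_{1}+(Q_{1}-1)o_{11}-Q_{1}o_{21}$. Hence $\det M(\lambda)=[-\lambda(\epsilon+\lambda)]\,(d_{1}-o_{11})^{Q_{1}-1}(d_{2}-o_{22})^{Q_{2}-1}\,\bar\nu$.

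It remains to set each factor to zero. The factor $-\lambda(\epsilon+\lambda)$ gives $\lambda=0$ (multiplicity $1$) and one further $\lambda=-\epsilon$, raising its total multiplicity to $(N-1)N+1$. Substituting the explicit $(\epsilon+\lambda)$-polynomials $d_{m}$, $o_{mm}$, $o_{21}$ — whose coefficients are $\sin(\alpha-\beta)$ together with the neighbour terms $\sin(\alpha+\beta),\sin(2\psi+\alpha+\beta),\sin(-2\psi+\alpha+\beta)$ weighted by the counts $Q_{1},Q_{2}$ — and simplifying turns $d_{1}-o_{11}=0$, $d_{2}-o_{22}=0$, $\bar\nu=0$ into precisely the displayed quadratics for $(\lambda_{1},\lambda_{2})$, $(\vartheta_{1},\vartheta_{2})$, $(\rho_{1},\rho_{2})$, with multiplicities $N_{1}-1$, $N_{2}-1$, $1$; a dimension count $1+(N-1)N+1+2(N_{1}-1)+2(N_{2}-1)+2=N+N^{2}$ confirms completeness, and the degenerate cases $Q_{1}\le1$ or $Q_{2}\le1$ (the solitary state among them) are covered by reading the affected multiplicities as $0$. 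The conceptual content — the Schur reduction plus the observation that only $2a_{i}\bmod 2\pi$ enters — is short; the main obstacle is the bookkeeping in the two computational steps: forming $BC$ with the correct block indexing, and carrying the product-to-sum simplifications through so that the inhomogeneous $\tfrac1N\sin\beta\cos\alpha$ contribution cancels and the mixed eigenvalue $\bar\nu$ assembles the cross-type terms $\sin(\pm2\psi+\alpha+\beta)$ with the right signs and weights $q_{1},q_{2}$. That is the only place I expect a real risk of error.
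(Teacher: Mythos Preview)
Your proposal is correct and follows essentially the same route as the paper: the Schur reduction via Lemma~\ref{lem:PhaseLockSchur}, the explicit computation of $BC$, the recognition that $M(\lambda)$ depends on the phases only through $2a_i\bmod 2\pi$ so that it acquires the block structure of Lemma~\ref{lem:BlockCirculantMatrix}, and the identification of the constant-mode eigenvalue via the common row sum $-\lambda(\epsilon+\lambda)$. Your formula $\bar\nu=d_{1}+(Q_{1}-1)o_{11}-Q_{1}o_{21}$, obtained from the Vieta product, is equivalent to the paper's $\bar\nu=m_{II}+(Q_{2}-1)\bar m-m_{1}Q_{2}$, obtained from the Vieta sum; your added dimension count and remark on the degenerate cases $Q_{1}\le1$ or $Q_{2}\le1$ are nice touches the paper omits.
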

\begin{proof}
	For an arbitrary solution of the form $\phi_{i}=\Omega t+a_{i}$ we
	consider the linearised system~(\ref{eq:Linearized_OneCl_phi})--(\ref{eq:Linearized_OneCl_kappa}) in the block form~\eqref{eq:LinearisationBlockForm} and apply Lemma~\ref{lem:PhaseLockSchur}. The elements of the second term
	$D:=BC$ of Schur's complement are then
	\begin{align*}
		d_{ij} &=-\frac{\epsilon}{2N}\left(\sin(\alpha-\beta)+\sin(2(a_{i}-a_{j})+\alpha+\beta)\right)
	\end{align*}
	if $i\ne j$ and
	\begin{multline*}
		d_{ii} =\epsilon\left(\frac{1}{2}\sin(\alpha-\beta)-\frac{1}{N}\sin(\alpha)\cos(\beta)\right.\\
		\left. +\frac{1}{2N}\sum_{j=1}^{N}\sin(2(a_{i}-a_{j})+\alpha+\beta)\right).
	\end{multline*}
	Defining the matrix $M:=\left(A-\lambda\mathbb{I}_{N}\right)\left(\epsilon+\lambda\right)+D$
	we get
	\begin{align*}
		m_{ij} & =\begin{cases}
			-\lambda^2+(a_{ii}-\epsilon)\lambda+\epsilon a_{ii}+d_{ii} & i=j\\
			\lambda a_{ij}+\epsilon a_{ij}+d_{ij}. & i\ne j
		\end{cases}
	\end{align*}
	Using the assumption for the phases $a_i$, then one group of oscillators (group $I$) have $a_{i}\in\{0,\pi\}$ and the remaining
	$Q_{2}$ oscillators (group $II$) have $a_{i}=\{\psi,\psi+\pi\}$. Putting this into the definition of $m_{ij}$, we find that the whole square matrix $M$ can be written as
	\begin{align*}
		M & =\left(\begin{array}{cc}
			\overbrace{\begin{array}{cccc}
					m_{I} & \bar{m} & \cdots & \bar{m}\\
					\bar{m} & \ddots & \ddots & \vdots\\
					\vdots & \ddots & \ddots & \bar{m}\\
					\bar{m} & \cdots & \bar{m} & m_{I}
			\end{array}}^{Q_{1}\times Q_{1}} & \begin{array}{c}
				m_{1} \hat{1}_{Q_1,Q_2}
			\end{array}\\
			\begin{array}{c}
				m_{2} \hat{1}_{Q_2,Q_1}
			\end{array} & \underbrace{\begin{array}{cccc}
					m_{II} & \bar{m} & \cdots & \bar{m}\\
					\bar{m} & \ddots & \ddots & \vdots\\
					\vdots & \ddots & \ddots & \bar{m}\\
					\bar{m} & \cdots & \bar{m} & m_{II}
			\end{array}}_{Q_{2}\times Q_{2}}
		\end{array}\right)
	\end{align*}
	where $m_1$, $\bar{m}$, $m_I$, and $m_{II}$ are real values which depend on all the system parameters $\alpha$, $\beta$, $\epsilon$ and additionally on $\psi$ and $\lambda$. Note that all diagonal blocks are circulant matrices. The determinant is invariant under basis transformations which is why we diagonalize
	the matrix $M$ and therewith derive equations for the values $\lambda$.
	In order to do so, we look for the eigenvalues of $M$ determined
	by the characteristic equation
	\begin{align*}
		\det\left(M-\mu\mathbb{I}_{N}\right) & =0.
	\end{align*}
	Due to the structure of $M$ we can apply Lemma~\ref{lem:BlockCirculantMatrix}
	and find the following set of eigenvalues
	\begin{widetext}
	\begin{align*}
		\mu_{k} & =-\lambda^2-\frac{1}{2}(\sin(\alpha-\beta)-q_{1}\sin(\alpha+\beta)-q_{2}\sin(-2\psi+\alpha+\beta)+2\epsilon)\lambda+\epsilon q_{1}\sin(\alpha+\beta)+\epsilon q_{2}\sin(-2\psi+\alpha+\beta)
	\end{align*}
	\end{widetext}
	for $k=1,\dots,Q_{1}-1.$ Analogously, we obtain the equations for $\nu_{k}$
	($k=1,\dots,Q_{2}-1$) where $m_{I}$ is substituted with $m_{II}$.
	The two other eigenvalue are given by $\bar{\mu}=\mu_{0}+m_{1}Q_{2}a_{1}$
	and $\bar{\nu}=\mu_{0}+m_{1}Q_{2}a_{2}$, respectively, where
	\begin{align*}
		\mu_{0} & =m_{I}+(Q_{1}-1)\bar{m}
	\end{align*}
	and $a_{1,2}$ are given by 
	\begin{align*}
		a^{2}+\frac{\left(m_{I}-m_{II}\right)+\left(Q_{1}-Q_{2}\right)\bar{m}}{m_{1}Q_{2}}a-\frac{m_{2}Q_{1}}{m_{1}Q_{2}} & =0.
	\end{align*}
	Considering the row sums of $M$ we find that all agree with $-\lambda^{2}-\epsilon\lambda$
	and therefore $\bar{\mu}=-\lambda^{2}-\epsilon\lambda$. Resulting
	from this $a_{1}=-\left(\lambda^{2}+\epsilon\lambda+\mu_{0}\right)/m_{1}Q_{2}=1$.
	Hence,
	\begin{align*}
		a_{2} & =\frac{\left(m_{II}-m_{I}\right)+\left(Q_{2}-Q_{1}\right)\bar{m}}{m_{1}Q_{2}}-1
	\end{align*}
	and we find 
	\begin{align*}
		\bar{\nu} & =m_{II}+\left(Q_{2}-1\right)\bar{m}-m_{1}Q_{2}\\
		 &=-\lambda^2-\frac{1}{2}(\sin(\alpha-\beta)-q_{1}\sin(2\psi+\alpha+\beta)\\
		 &-q_2\sin(-2\psi+\alpha+\beta)+2\epsilon)\lambda\\
		&+\epsilon q_{1}\sin(2\psi+\alpha+\beta)+\epsilon q_{2}\sin(-2\psi+\alpha+\beta)
	\end{align*}
	After diagonalizing the matrix $M$ the determinant can be easily
	written as
	\begin{align*}
		\det(M) & =\bar{\mu}\cdot\mu_{1}\cdot\dots\cdot\mu_{N_{1}-1}\cdot\bar{\nu}\cdot\nu_{1}\cdot\dots\cdot\nu_{N_{2}-1}.
	\end{align*}
	Therewith, finding $\lambda's$ such that at least one of the eigenvalues
	of $M$ vanishes solves the initial eigenvalue problem.
\end{proof}
We will now sum up the results with the following corollaries
\begin{cor}\label{cor:StabAntipodal}
	The set of eigenvalues of the linearised system~ (\ref{eq:Linearized_OneCl_phi})--(\ref{eq:Linearized_OneCl_kappa}) around all antipodal states with $a_i\in\{0,\pi\}$ agrees with the set $L$ in Prop.~\ref{prop:LinerizedOneCluster_RW_Spectrum} for rotating-wave states with $k=0,N/2$.
\end{cor}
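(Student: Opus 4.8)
The plan is to show that, at the level of the linearisation, an antipodal one-cluster state is indistinguishable from the in-phase (equivalently anti-phase) synchronous state, so that the corollary follows at once from Lemma~\ref{lem:PhaseLockSchur} together with case (1) of Proposition~\ref{prop:LinerizedOneCluster_RW_Spectrum}.

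First I would record the elementary fact that $a_i\in\{0,\pi\}$ implies $2a_i\equiv 0\pmod{2\pi}$, hence $2(a_i-a_j)\equiv 0\pmod{2\pi}$ for all $i,j$. Inspecting the explicit entries $a_{ij}$ of the matrix $A$ listed at the beginning of this appendix, one sees that $A$ depends on the phase configuration only through the quantities $\sin(2(a_i-a_j)+\alpha+\beta)$; likewise, the entries $d_{ij}$ of the product $D:=BC$ — computed in the proof of Proposition~\ref{prop:stab_double_antipodal} — depend on the configuration only through $\sin(2(a_i-a_j)+\alpha+\beta)$. Therefore, for every antipodal state with $a_i\in\{0,\pi\}$ the matrices $A$ and $BC$ coincide with the corresponding matrices for the rotating-wave state $\mathbf{a}_k$ with $k=0$ (and with $k=N/2$, for which also $2(a_i-a_j)\equiv 0$). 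By Lemma~\ref{lem:PhaseLockSchur} the spectrum of the full linearised system~\eqref{eq:Linearized_OneCl_phi}--\eqref{eq:Linearized_OneCl_kappa} consists of $-\epsilon$ with multiplicity $N^2-N$ together with the roots of $\det((A-\lambda\mathbb{I}_N)(\epsilon+\lambda)+BC)=0$; since this reduced characteristic equation involves only $A$, $BC$ and $\epsilon$, it is literally the same equation as for $\mathbf{a}_0$. Hence the full set of eigenvalues equals the set $L$ of Proposition~\ref{prop:LinerizedOneCluster_RW_Spectrum}(1).

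As a consistency check — and an alternative route — the corollary is the degenerate case $q_1=1$, $q_2=0$ (equivalently $N_2=0$) of Proposition~\ref{prop:stab_double_antipodal}: the block matrix $M$ in that proof collapses to a single circulant $N\times N$ block, the $\vartheta$- and $\rho$-branches disappear, and, using the identity $\tfrac12\bigl(\sin(\alpha-\beta)-\sin(\alpha+\beta)\bigr)=-\cos\alpha\sin\beta$, the $\lambda$-equation reduces to $\lambda^2+(\epsilon-\cos\alpha\sin\beta)\lambda-\epsilon\sin(\alpha+\beta)=0$ with multiplicity $N-1$ for each root, in agreement with case (1) of Proposition~\ref{prop:LinerizedOneCluster_RW_Spectrum}. (If one prefers to start from a general antipodal state defined by $R_2(\mathbf{a})=1$, the shift symmetry $\phi_i\mapsto\phi_i+\psi$ first reduces it to one with $a_i\in\{0,\pi\}$.)

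The argument is essentially bookkeeping, so there is no serious obstacle; the one point deserving a line of justification is that, although the off-diagonal factors $B$ and $C$ individually ``see'' the phases through the single-angle terms $\cos(a_i-a_j+\beta)$ and $\sin(a_i-a_j+\alpha)$, their product $BC$ collapses to an expression in the double-angle combinations $2(a_i-a_j)$ only — which is precisely why all antipodal states share the spectrum of the synchronous state. In the alternative route, the only care needed is the multiplicity count in the degenerate block structure: one must verify that the all-ones eigenvector of $M$ contributes exactly the eigenvalues $0$ and $-\epsilon$ (arising from the constant row sum $\bar\mu=-\lambda^2-\epsilon\lambda$ of $M$), so that the total multiplicities come out as $(0)_1$ and $(-\epsilon)_{(N-1)N+1}$, matching $L$.
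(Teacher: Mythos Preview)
Your proposal is correct. The paper's own proof is exactly your ``alternative route'': it sets $Q_2=0$ in Proposition~\ref{prop:stab_double_antipodal}, so only the $\lambda$-equation survives, and with $q_1=1$, $q_2=0$ this is $\lambda^2+(\epsilon-\cos\alpha\sin\beta)\lambda-\epsilon\sin(\alpha+\beta)=0$, matching case~(1) of Proposition~\ref{prop:LinerizedOneCluster_RW_Spectrum}.

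Your primary route is a genuinely more elementary argument: rather than specialising the block-circulant machinery of Proposition~\ref{prop:stab_double_antipodal}, you observe directly that both $A$ and the Schur product $BC$ depend on the configuration only through the double angles $2(a_i-a_j)$, which vanish modulo $2\pi$ for any antipodal state. Combined with Lemma~\ref{lem:PhaseLockSchur}, this shows the reduced characteristic polynomial is literally identical to that of $\mathbf{a}_0$, so no further computation is needed. This buys a cleaner explanation of \emph{why} all antipodal states share the in-phase spectrum, and it does not rely on the correctness of the more involved Proposition~\ref{prop:stab_double_antipodal}. The paper's route, in turn, gets the multiplicities for free from the general proposition without having to revisit the row-sum argument for the $(0)_1$ and $(-\epsilon)_{(N-1)N+1}$ count.
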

\begin{proof}
	Put $Q_2=0$ in Prop.~\ref{prop:stab_double_antipodal}, then there is only the equation for $\lambda$ left. 
\end{proof}
\begin{cor}\label{lem:Stab4phaseCl}
	The set of eigenvalues to the linearised system~ (\ref{eq:Linearized_OneCl_phi})--(\ref{eq:Linearized_OneCl_kappa}) around all $4$-phase-cluster states with $a_i\in\{0,pi/2,\pi,3\pi/2\}$ and $R_2(\bm{a})=0$ agrees with the set $L$ in Prop.~\ref{prop:LinerizedOneCluster_RW_Spectrum} for $4$-rotating-wave states.
\end{cor}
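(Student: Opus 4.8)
The plan is to obtain Corollary~\ref{lem:Stab4phaseCl} as a direct specialisation of Proposition~\ref{prop:stab_double_antipodal}: a $4$-phase cluster state with $a_i\in\{0,\pi/2,\pi,3\pi/2\}$ is precisely a state of the type treated there with $\psi=\pi/2$, so it only remains to insert $\psi=\pi/2$ together with the constraint that $R_2(\bm a)=0$ imposes, and then to simplify the three quadratic equations appearing in that proposition.

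First I would record what $R_2(\bm a)=0$ means here. For $a_i\in\{0,\pi/2,\pi,3\pi/2\}$ one has $e^{2\mathrm{i}a_i}=1$ when $a_i\in\{0,\pi\}$ and $e^{2\mathrm{i}a_i}=-1$ when $a_i\in\{\pi/2,3\pi/2\}$, hence $\sum_{j=1}^{N}e^{2\mathrm{i}a_j}=Q_1-Q_2$ in the notation of Proposition~\ref{prop:stab_double_antipodal}. Thus $R_2(\bm a)=|Q_1-Q_2|/N=0$ is equivalent to $Q_1=Q_2=N/2$, i.e. $q_1=q_2=1/2$; in particular both groups are non-empty, so the proposition applies. (As for every phase-locked state with $R_2(\bm a)=0$, the $i$-dependence of the collective frequency drops out, so $\phi_i=\Omega t+a_i$ with $\kappa_{ij}=-\sin(a_i-a_j+\beta)$ and $\Omega=\tfrac12\cos(\alpha-\beta)$ is genuinely a solution of \eqref{eq:PhiDGL_general}--\eqref{eq:KappaDGL_general}.) Since the phases then lie in $\{0,\pi,\psi,\psi+\pi\}$ with $\psi=\pi/2$, the full spectrum is read off from Proposition~\ref{prop:stab_double_antipodal}.

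The remaining step is purely algebraic. Using $\sin(\pm\pi+x)=-\sin x$ gives $\sin(2\psi+\alpha+\beta)=\sin(-2\psi+\alpha+\beta)=-\sin(\alpha+\beta)$ for $\psi=\pi/2$. Substituting this and $q_1=q_2=\tfrac12$, the quadratic solved by $\lambda_{1,2}$ and the quadratic solved by $\vartheta_{1,2}$ in Proposition~\ref{prop:stab_double_antipodal} both collapse to
\begin{align*}
	\lambda^2+\bigl(\epsilon+\tfrac12\sin(\alpha-\beta)\bigr)\lambda=0,
\end{align*}
with roots $0$ and $-\tfrac12\sin(\alpha-\beta)-\epsilon$; counted with their multiplicities $N_1-1=N_2-1=N/2-1$ and combined with the isolated eigenvalue $0$, this yields exactly $(0)_{N-1}$ and $\bigl(-\tfrac12\sin(\alpha-\beta)-\epsilon\bigr)_{N-2}$. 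The quadratic solved by $\rho_{1,2}$ becomes, after $\sin(\alpha-\beta)+\sin(\alpha+\beta)=2\sin\alpha\cos\beta$,
\begin{align*}
	\rho^2+\bigl(\epsilon+\sin\alpha\cos\beta\bigr)\rho+\epsilon\sin(\alpha+\beta)=0.
\end{align*}
Together with the unchanged eigenvalue $-\epsilon$ of multiplicity $(N-1)N+1$, the resulting set coincides with the set $L$ in the third case of Proposition~\ref{prop:LinerizedOneCluster_RW_Spectrum} for the $4$-rotating-wave state, which is the assertion.

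I do not anticipate a real obstacle: once the identification $\psi=\pi/2$, $q_1=q_2=1/2$ is made, everything reduces to a finite trigonometric computation. The only point requiring a little care is the multiplicity bookkeeping that merges the (now coincident) roots of the first two quadratics with the single zero eigenvalue to recover the multiplicities $N-1$ and $N-2$. A minor additional check is that the block-circulant reduction used in the proof of Proposition~\ref{prop:stab_double_antipodal} is not spoiled by the coincidences $2\psi\equiv\pi$ and $-2\psi\equiv\pi$ modulo $2\pi$ — it is not, since that reduction only requires $2(a_i-a_j)$ to be constant on each of the two intra-group and two inter-group blocks, which still holds.
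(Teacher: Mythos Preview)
Your proposal is correct and follows exactly the approach of the paper: the paper's proof consists of the two sentences ``The requirement $R_2(\bm{a})=0$ yields $Q_1=Q_2$. The statement of this proposition follows by using Prop.~\ref{prop:stab_double_antipodal},'' and you have carried out precisely this specialisation, supplying the explicit trigonometric and multiplicity computations that the paper leaves implicit.
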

\begin{proof}
	The requirement $R_2(\bm{a})=0$ yields $Q_1=Q_2$. The statement of this proposition follows by using Prop.~\ref{prop:stab_double_antipodal}.
\end{proof}
\begin{cor}\label{cor:StabDoubleAntipodal}
	For all $\alpha$ and $\beta$ the double antipodal states are unstable.
\end{cor}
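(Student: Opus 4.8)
The plan is to read the spectrum of the linearisation off Proposition~\ref{prop:stab_double_antipodal} and exhibit an eigenvalue with positive real part. Apart from the trivial eigenvalue $0$ and the eigenvalue $-\epsilon$, that spectrum consists of the roots of three real quadratics $t^{2}+b_{\bullet}t+c_{\bullet}=0$ for $\bullet\in\{\lambda,\vartheta,\rho\}$. Since a real quadratic with $c<0$ has two real roots of opposite sign, hence one with positive real part, it suffices to show that for every genuine double antipodal state one of the three constant terms is strictly negative.

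Write $\gamma=\alpha+\beta$ and use $q_{1}+q_{2}=1$. The constant terms are $c_{\lambda}=-\epsilon S_{\lambda}$, $c_{\vartheta}=-\epsilon S_{\vartheta}$, $c_{\rho}=-\epsilon S_{\rho}$ with $S_{\lambda}=q_{1}\sin\gamma+q_{2}\sin(\gamma-2\psi)$, $S_{\vartheta}=q_{1}\sin(\gamma+2\psi)+q_{2}\sin\gamma$, $S_{\rho}=q_{1}\sin(\gamma+2\psi)+q_{2}\sin(\gamma-2\psi)$. I would first rewrite the defining equation~\eqref{eq:DoubleAntipodalCond} for $\psi$ (with $q=q_{1}$, $1-q=q_{2}$) in the form $(q_{2}-q_{1})\sin\psi\cos\gamma=\cos\psi\sin\gamma$. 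Expanding $S_{\rho}=\sin\gamma\cos 2\psi+(q_{1}-q_{2})\cos\gamma\sin 2\psi$ by angle addition, substituting $(q_{1}-q_{2})\cos\gamma=-\cos\psi\sin\gamma/\sin\psi$ from this equation, and using $\cos 2\psi-2\cos^{2}\psi\equiv-1$ gives the key identity $S_{\rho}=-\sin\gamma$. Since $S_{\lambda}+S_{\vartheta}=\sin\gamma+S_{\rho}$ holds trivially, this yields $S_{\lambda}+S_{\vartheta}=0$, i.e.\ $c_{\vartheta}=-c_{\lambda}$.

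Hence, unless $c_{\lambda}=c_{\vartheta}=0$, exactly one of the $\lambda$- and $\vartheta$-quadratics has a strictly negative constant term and therefore a real positive eigenvalue, so the state is unstable; the multiplicity $N_{1}-1$ of that eigenvalue matches the one-dimensional unstable manifold used in Sec.~\ref{sec:doubleAntipodal} in the case $Q_{1}=2$. It remains to exclude $S_{\lambda}=0$. Continuing the same manipulation one obtains the compact formula $S_{\lambda}=(q_{1}-q_{2})\sin^{2}\psi/\sin\gamma$, so $S_{\lambda}=0$ would force $\sin\psi=0$, i.e.\ $\psi\in\{0,\pi\}$, in which case all $a_{i}\in\{0,\pi\}$ and the state is an antipodal, not a double antipodal, state (cf.\ Cor.~\ref{cor:StabAntipodal}); or $q_{1}=q_{2}$, in which case the $\psi$-equation forces $\cos\psi\sin\gamma=0$, leaving only $\psi\in\{\pi/2,3\pi/2\}$, which gives $R_{2}(\mathbf a)=|q_{1}+q_{2}e^{2\mathrm{i}\psi}|=0$, a splay-type $4$-phase cluster covered by Cor.~\ref{lem:Stab4phaseCl}, or $\sin\gamma=0$, for which~\eqref{eq:DoubleAntipodalCond} no longer admits a unique $\psi$ and no double antipodal state is defined. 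Thus $S_{\lambda}\neq0$ for every genuine double antipodal state, and instability follows for all $\alpha$ and $\beta$.

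The main obstacle is the trigonometric reduction of the defining equation~\eqref{eq:DoubleAntipodalCond} to the statement $S_{\rho}=-\sin(\alpha+\beta)$ — that is, recognising that the relation fixing $\psi$ is exactly what forces the constant terms of the $\lambda$- and $\vartheta$-branches to be negatives of one another; everything else is bookkeeping together with the short check of the degenerate configurations against the classification of one-cluster states in Sec.~\ref{sec:blocks}.
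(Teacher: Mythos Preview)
Your argument is essentially the paper's own: both proofs use the defining relation~\eqref{eq:DoubleAntipodalCond} to obtain the identity $q_{1}\sin(2\psi+\gamma)+q_{2}\sin(-2\psi+\gamma)=-\sin\gamma$ (your $S_{\rho}=-\sin\gamma$), from which $S_{\lambda}+S_{\vartheta}=0$ follows, so the constant-term sign conditions on the $\lambda$- and $\vartheta$-quadratics are mutually exclusive. The paper simply states this as the pair of incompatible inequalities $S_{\vartheta}>0$ and $S_{\vartheta}<0$ without spelling out the intermediate step $S_{\lambda}=-S_{\vartheta}$ that you make explicit.

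Your treatment is in fact slightly more complete: the paper does not discuss the borderline case $S_{\lambda}=S_{\vartheta}=0$, whereas you trace it back to the degenerate configurations (antipodal, $4$-phase splay, or $\sin\gamma=0$) excluded by the definition of a double antipodal state. One small caveat: your closed formula $S_{\lambda}=(q_{1}-q_{2})\sin^{2}\psi/\sin\gamma$ does not appear to be exactly right---a direct computation using $(q_{2}-q_{1})\sin\psi\cos\gamma=\cos\psi\sin\gamma$ gives instead $S_{\lambda}=\sin\gamma\,[(q_{1}-q_{2})^{2}\sin^{2}\psi+\cos^{2}\psi]/(q_{1}-q_{2})$---but this does not affect your conclusion, since the bracketed expression is strictly positive for $q_{1}\neq q_{2}$, and $S_{\lambda}=0$ then still forces $\sin\gamma=0$, which via the defining equation forces $\sin\psi=0$. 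So the degenerate-case analysis stands.
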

\begin{proof}
	Suppose the polynomial equation $p(x)=x^2+ax+b=0$. This equation has two negative roots if and only if $b>0$ and $a>0$ meaning that $p(0)>0$ and the vertex of the parabola is at $x<0$, respectively. In order to have stable double antipodal states these two conditions have to be met by all three equations for $\lambda$, $\vartheta$ and $\rho$ in Proposition~\ref{prop:stab_double_antipodal}. From the condition on the existence of double antipodal states~\eqref{eq:DoubleAntipodalCond} we find $q_1\sin(2\psi+\alpha+\beta)+q_2\sin(-2\psi+\alpha+\beta)=-\sin(\alpha+\beta)$. With this assumption on the quadratic equation and the latter equation, we find the following two necessary conditions for the stability of double antipodal states, (1) $q_1\sin(2\psi+\alpha+\beta)+q_2\sin(\alpha+\beta)>0$ and (2) $q_1\sin(2\psi+\alpha+\beta)+q_2\sin(\alpha+\beta)<0$. The two condition cannot be equally fulfilled.
\end{proof}
In the following, we give a necessary condition for the stability of all one-cluster states of splay type, in contrast to the result on rotating-wave states given in Prop.~\ref{prop:LinerizedOneCluster_RW_Spectrum}. In general all splay one-cluster states have the property $R_2(\bm{a})=0$ for the phase given by the vector $\bm{a}$. Therefore, the splay states form $N-2$ dimensional family of solution. Hence, around each splay states there are $N-2$ neutral variational directions $\left(\delta\bm{\phi},\delta\kappa\right)^T$ which are determined by the condition $\sum_{j=1}^N e^{\mathrm{i}2a_j}\delta\phi_j=0$. Note, $\delta\kappa_{ij}=-\cos(a_i-a_j+\beta)\left(\delta\phi_i - \phi_j\right)$ in neutral direction.
\begin{prop}\label{prop:NecCondSplay}
	Consider an asymptotically stable one-cluster state of splay type. Then, $\epsilon+\sin(\alpha-\beta)/2>0$.
\end{prop}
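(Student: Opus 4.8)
The plan is to produce, for any splay one-cluster state, an explicit eigendirection of the linearisation \eqref{eq:Linearized_OneCl_phi}--\eqref{eq:Linearized_OneCl_kappa} whose growth rate equals exactly $-\bigl(\epsilon+\tfrac12\sin(\alpha-\beta)\bigr)$; asymptotic stability then forces this number to be negative, which is the assertion. The direction is suggested by the remark preceding the statement: the $(N-2)$-parameter splay family supplies neutral modes with $\delta\bm\phi$ in the subspace $\mathcal N=\{\,\delta\bm\phi:\sum_j e^{\mathrm i 2a_j}\delta\phi_j=0\,\}$ and $\delta\kappa$ slaved as $\delta\kappa_{ij}=-\cos(a_i-a_j+\beta)(\delta\phi_i-\delta\phi_j)$. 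The idea is to keep $\delta\bm\phi$ inside $\mathcal N$ but let the adaptation dynamics, rather than the slaving relation, determine $\delta\kappa$.

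Concretely, I would pick a real vector $v$ with $\sum_j e^{\mathrm i 2a_j}v_j=0$ and $\sum_j v_j=0$ (these conditions cut out a space of dimension at least $N-3$, hence nontrivial for $N\ge 4$) and insert the ansatz $\delta\bm\phi(t)=e^{\mu t}v$, $\delta\kappa_{ij}(t)=e^{\mu t}\eta_{ij}$. Equation \eqref{eq:Linearized_OneCl_kappa} gives $\eta_{ij}=-\dfrac{\epsilon\cos(a_i-a_j+\beta)}{\mu+\epsilon}\,(v_i-v_j)$, valid once $\mu\ne-\epsilon$ — and if $\mu=-\epsilon$ there is nothing to prove. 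Substituting this into \eqref{eq:Linearized_OneCl_phi}, I would simplify using the splay identity $R_2(\mathbf a)=0$ in three guises: $\sum_j e^{\mathrm i 2a_j}=0$; $\sum_k\sin\bigl(2(a_i-a_k)+\alpha+\beta\bigr)=0$ for each $i$ (which makes the diagonal of $A$ constant); and $\sum_j\sin\bigl(2(a_i-a_j)+\alpha+\beta\bigr)v_j=0$ for each $i$, which holds because $v$ is real and $\sum_j e^{\mathrm i 2a_j}v_j=0$. After rewriting $\sin(a_i-a_j+\alpha)\cos(a_i-a_j+\beta)=\tfrac12\bigl(\sin(2(a_i-a_j)+\alpha+\beta)+\sin(\alpha-\beta)\bigr)$ in the coupling term and using $\sum_j v_j=0$, every sum over $j$ of a phase-difference function times $v_j$ collapses and the $O(1/N)$ self-coupling corrections cancel identically. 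What remains is the componentwise scalar relation
\[
\mu \;=\; -\tfrac12\sin(\alpha-\beta)\;+\;\frac{\epsilon\,\sin(\alpha-\beta)}{2(\mu+\epsilon)},
\]
equivalently $\mu\bigl(\mu+\epsilon+\tfrac12\sin(\alpha-\beta)\bigr)=0$.

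The root $\mu=0$ is precisely the slaved (family) direction, while $\mu=-\bigl(\epsilon+\tfrac12\sin(\alpha-\beta)\bigr)$, together with the corresponding $\eta_{ij}$, is a genuine eigenvector of the linearised system. Since asymptotic stability requires every eigenvalue to have strictly negative real part, this yields $\epsilon+\tfrac12\sin(\alpha-\beta)>0$. For $N=3$ and for the degenerate splay configurations (such as the $4$-phase clusters) the claim is already contained in Proposition~\ref{prop:LinerizedOneCluster_RW_Spectrum} and Corollary~\ref{lem:Stab4phaseCl}, whose explicit spectra list the eigenvalue $-\sin(\alpha-\beta)/2-\epsilon$. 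Equivalently, the whole argument can be organised through Lemma~\ref{lem:PhaseLockSchur}: on a splay state the Schur complement $M=(A-\lambda\mathbb I_N)(\epsilon+\lambda)+BC$ reduces to $-\lambda\bigl(\lambda+\epsilon+\tfrac12\sin(\alpha-\beta)\bigr)\mathbb I_N$ plus a matrix of rank at most three supported on $\mathrm{span}\{\mathbf 1,(e^{\mathrm i 2a_j})_j,(e^{-\mathrm i 2a_j})_j\}$, so $\lambda=-\bigl(\epsilon+\tfrac12\sin(\alpha-\beta)\bigr)$ is a root of $\det M$ of multiplicity at least $N-3$.

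The one genuinely laborious step — and the place where the splay hypothesis is essential — is the cancellation bookkeeping in the substitution: one must verify that every cross term $\sum_j(\cdots)v_j$ vanishes and that all $O(1/N)$ diagonal corrections cancel. This is exactly what $R_2(\mathbf a)=0$ buys, and it is precisely what fails for a generic phase-locked state.
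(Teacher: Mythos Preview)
Your proof is correct and takes essentially the same route as the paper: both restrict to the subspace $\{\delta\bm\phi:\sum_j e^{\mathrm i 2a_j}\delta\phi_j=0,\ \sum_j\delta\phi_j=0\}$, use $R_2(\mathbf a)=0$ to kill the phase-dependent sums, and extract the eigenvalue $-(\epsilon+\tfrac12\sin(\alpha-\beta))$ from the resulting scalar quadratic. Your direct eigenvector ansatz is a slightly more transparent packaging of the paper's Schur-complement reduction to a circulant matrix, but the substance and the key identities used are identical.
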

\begin{proof}
	Due to the block form of the linearised equation~\eqref{eq:LinearisationBlockForm} and the Schur decomposition~\eqref{eq:SchurComplement}, any eigenvalue comes with a second. We have already seen this in Lemma~\ref{lem:BlockCirculantMatrix} and Proposition~\ref{prop:stab_double_antipodal}. Variation along the neutral direction gives $N-2$ times the eigenvalue $0$. Suppose we have $\delta\bm{\phi}$ such that $\sum_{j=1}^N e^{\mathrm{i}2a_j}\delta\phi_j=0$ and $\delta\kappa_{ij}=-\cos(a_i-a_j+\beta)\left(\delta\phi_i - \phi_j\right)$. Applying Schur decomposition~\eqref{eq:SchurComplement}, we get
	\begin{widetext}
		\begin{align}\label{eq:SchurOnSplay}
			\left(M-\lambda\mathbb{I}\right)_{N^2+N} \begin{pmatrix}
			\delta\bm{\phi} \\
			\delta\kappa
			\end{pmatrix}& =\begin{pmatrix}\mathbb{I}_{N} & -(\epsilon+\lambda)B\\
			0 & \mathbb{I}_{N^2}
			\end{pmatrix}\begin{pmatrix}(A-\lambda\mathbb{I}_N)+\frac{1}{\epsilon+\lambda}BC & 0\\
			0 & -(\epsilon+\lambda)
			\end{pmatrix}
			\begin{pmatrix}
				\delta\bm{\phi} \\
				-\frac{1}{\epsilon+\lambda}C\delta\bm{\phi} + \delta\kappa
			\end{pmatrix} = 0.
		\end{align}
	\end{widetext}
	With this, we have to find $\lambda$ such that the last equality in \eqref{eq:SchurOnSplay} is fulfilled. This is equivalent to solving $\left((A-\lambda\mathbb{I}_N){(\epsilon+\lambda)}+BC\right)\delta\bm{\phi}=0$ of which in general only $N-2$ equations are linearly independent. The equivalence can be seen by multiplying $\epsilon+\lambda$ from both sides and keeping in mind that $\delta\kappa$ is already determined by $\delta\bm{\phi}$. Using the definition of $\delta\bm{\phi}$ the matrices $A$ and $BC$ can be effectively reduced in such a way that they are independent of the actual values for the phases $a_j$. In fact,
	\begin{align*}
		a_{ij} & =\begin{cases}
		-\frac{N-1}{2N}\sin(\alpha-\beta) & i=j\\
		\frac{1}{2N}\sin(\alpha-\beta), & i\ne j
		\end{cases}\\
		(bc)_{ij} & =\begin{cases}
		\epsilon\frac{N-1}{2N}\sin(\alpha-\beta), & i=j\\
		-\frac{\epsilon}{2N}\sin(\alpha-\beta). & i\ne j
		\end{cases}
	\end{align*}
	In turn, this gives $\left((A-\lambda\mathbb{I}_N){(\epsilon+\lambda)}+BC\right)$ a circulant structure which can be used to diagonalise the matrix, in analogy to Proposition~\ref{prop:stab_double_antipodal}. For circulant matrices we immediately know the eigenvalues. They are
	\begin{multline*}
		\mu_l = -\lambda^2- \left(\frac{N-1}{2N}\sin(\alpha-\beta)\right.\\
		\left.-\frac{1}{2N}\sin(\alpha-\beta)\left(\sum_{k=0}^{N-1}e^{\mathrm{i}2\pi kl/N}-1\right)+\epsilon\right)\lambda
	\end{multline*}
	with $l=0,\dots,N-1$ and $\det\left((A-\lambda\mathbb{I}_N){(\epsilon+\lambda)}+BC\right)=\mu_0(\lambda)\cdots\mu_{N-1}(\lambda)$. Remember we have in general $N-2$ independent equations. Thus, solving $\mu_l(\lambda)=0$ for $\lambda$ results in $N-2$ eigenvalues $\lambda=0$, $1$ eigenvalue $\lambda=-\epsilon$ and $N-3$ eigenvalues $\lambda=-\epsilon-\sin(\alpha-\beta)/2$. Note that for $4$-phase-cluster states, as considered in Corollary~\ref{lem:Stab4phaseCl}, the number of independent equations is $N-1$. This is due to the fact that in this case the equations for the imaginary and real part from $\sum_{j=1}^N e^{\mathrm{i}2a_j}\delta\phi_j=0$ agree.
\end{proof}
\end{document}